\newtheorem{theorem}{Theorem}
\newtheorem{proposition}{Proposition}
\newtheorem{definition}{Definition}
\newtheorem{lemma}{Lemma}
\newtheorem{assumption}{Assumption}
\begin{document}
\title{Distributed Optimization for Reactive Power\\Sharing and Stability of Inverter-Based\\Resources Under Voltage Limits}

\author{
Babak Abdolmaleki, John W. Simpson-Porco, and Gilbert Bergna-Diaz\vspace{-.1in}
\thanks{This work was supported by the Department of Electric Energy, NTNU, under Grant 81148137.}
\thanks{B. Abdolmaleki is with the Department of Electric Energy, Norwegian University of Science and Technology, 7491 Trondheim, Norway, and on leave form the Department of Electrical and Computer Engineering, University of Toronto, 10 King’s College Road, Toronto, ON, M5S 3G4, Canada (email: babak.abdolmaleki@ntnu.no).}
\thanks{J. W. Simpson-Porco is with the Department of Electrical and Computer Engineering, University of Toronto, 10 King’s College Road, Toronto, ON, M5S 3G4, Canada (email: jwsimpson@ece.utoronto.ca).}
\thanks{G. Bergna-Diaz is with the Department of Electric Energy, Norwegian University of Science and Technology, 7491 Trondheim, Norway (email: gilbert.bergna@ntnu.no).}
}
\markboth{Submitted for Publication. THIS VERSION: \today}
{Submitted for Publication. THIS VERSION: \today}

\maketitle
\begin{abstract}
Reactive power sharing and containment of voltages within limits for inverter-based resources (IBRs) are two important, yet coupled objectives in ac networks. In this article, we propose a distributed control technique to simultaneously achieve these objectives. Our controller consists of two components: a purely local nonlinear integral controller which adjusts the IBR voltage setpoint, and a distributed primal-dual optimizer that coordinates reactive power sharing between the IBRs. The controller prioritizes the voltage containment objective over reactive power sharing \textit{at all points in time}; excluding the IBRs with saturated voltages, it provides reactive power sharing among all the IBRs. Considering the voltage saturation and the coupling between voltage and angle dynamics, a formal closed-loop stability analysis based on singular perturbation theory is provided, yielding practical tuning guidance for the overall control system. To validate the effectiveness of the proposed controller for different case studies, we apply it to a low-voltage microgrid and a microgrid adapted from the CIGRE medium-voltage network benchmark, both simulated in the MATLAB/Simulink environment.
\end{abstract}
\begin{IEEEkeywords}
Distributed optimization, inverter-based resources, reactive power sharing, voltage stability.\vspace{-.1in}
\end{IEEEkeywords}
\section{Introduction}
\label{Sec:Intro}

\IEEEPARstart{P}{ower} systems are moving toward the use of more renewable energy, leading to an increasing share of inverter-based resources (IBRs) in electric networks\cite{Yunjie2022}. Together with increased power supply-demand uncertainties, this shift introduces new operation and control challenges, which in turn require new control solutions\cite{Yunjie2022,IEEEMGTF}.
Among others, proportional active and reactive power sharing among dispatchable IBRs are two important control objectives. Moreover, IBRs that are non-dispatchable in terms of active power, e.g., wind and solar units, may also participate in reactive power sharing\cite{IEEEMGTF,KhayatReview}.

Since frequency is a globally-common variable, it can be exploited to facilitate active power sharing among the IBRs\cite{Babak2019}. Voltage (magnitude), however, is not globally unique and differs from bus to bus depending on the line impedance values; therefore, it cannot be used to enforce global reactive power sharing\cite{IEEEMGTF,KhayatReview}. Reactive power flow depends most strongly on the bus voltages and their differences, in inductive networks in particular. This dependency causes an inherent trade-off between precise reactive power sharing and individual bus voltage regulation, motivating a significant volume of research work on this topic. Different centralized, decentralized, and distributed voltage and reactive power control techniques have been proposed for IBRs. The distributed techniques have attracted significant attention in power system control, especially for large-scale integration of IBRs\cite{KhayatReview}. Compared to their centralized counterparts, they rely on the exchange of information only between neighboring IBRs. In addition, they show better performance and accuracy than decentralized solutions, such as the droop control technique\cite{Simpson2017}. Therefore, it seems that the real-time distributed techniques can be a viable strategy in many situations\cite{KhayatReview,Molzahn2017}.\vspace{-.1in}

\subsection{Literature Review and Research Gaps}
\label{Subsec:LiteratureReview}
Distributed voltage and reactive power sharing control of IBRs has been studied in many papers. Some works have focused solely on the voltage regulation task and have not considered the reactive power sharing problem. The only objective in these papers is to regulate the voltages of the IBRs to a setpoint. This setpoint may be constant, or may, be updated by an external controller. For example, in\cite{Bidram2013,Shahab2020,Afshari2022,Babak2020,Chen2021,Guanglei2022} and some references therein, assuming that only a few IBRs can directly access the voltage setpoint, a leader-follower consensus algorithm is used for the IBRs to follow this setpoint which is considered a virtual leader.

Conversely, in other lines of research, reactive power sharing is considered as the main objective, and the voltage control requirements are either neglected or discussed only briefly. For example, in\cite{Schiffer2016,Fan2017,Weng2019,Li2021,Wong2020,Wong2021,Zhou2020}, distributed consensus algorithms are used to ensure a proportional reactive power sharing among the IBRs, regardless of the impacts of the controllers on the voltages. However, voltage regulation and reactive power sharing are both important, yet coupled and conflicting; therefore, they should be considered simultaneously.

Simultaneous reactive power sharing and voltage regulation has also been studied. In\cite{Bidram2014}, a leader-follower consensus-based control is proposed for reactive power sharing and voltage tracking problems, where the voltage setpoint is given by a critical bus voltage regulator. Different versions of this scheme are studied and proposed in\cite{Habibi2023,Choi2021,Raeispour2021,Ge2021}. A somewhat similar controller is proposed in\cite{Porco2015}, where unlike in \cite{Bidram2014,Choi2021,Raeispour2021,Ge2021} it is assumed that all the IBRs can directly access the voltage setpoint. These controllers, however, use a single integrator for achieving both the objectives; therefore, the accuracy of reactive power sharing and voltage regulation highly depends on the choice of control gains. The existence of the trade-off between the two objectives is discussed in\cite{Porco2015,Ge2021} as well. In\cite{Porco2015}, tuning of the control gains is suggested as a possible solution for dealing with this trade-off, while in\cite{Ge2021}, the issue is left as an open problem.

Another combination of control objectives is precise reactive power sharing and \emph{average} voltage regulation \cite{Lai2016,Lu2018,Wang2019,Nasirian2016,Shafiee2018,Zhou2018,Shi2020,Shi2021,Mohiuddin2022,Mohiuddin2020}. In this approach, instead of the individual voltages of the IBRs, their estimated average is regulated at a setpoint. To this end, in\cite{Lai2016,Lu2018,Wang2019}, the leader-follower consensus algorithm is used for average voltage regulation. Based on the leader-less consensus algorithm, a controller is proposed in\cite{Nasirian2016} where the voltage setpoint of each IBR is corrected by two terms providing average voltage regulation and accurate reactive power sharing, separately. Similarly, two other approaches are proposed in\cite{Shafiee2018,Zhou2018}, but power sharing is achieved by adjusting the droop coefficient\cite{Shafiee2018} or by changing the virtual impedance\cite{Zhou2018}. To improve the voltage profile and accuracy under input disturbances, some modified controllers are also proposed in\cite{Mohiuddin2022,Mohiuddin2020,Shi2020,Shi2021}.

While the above-mentioned control schemes can provide average voltage regulation, they may result in large deviations in the individual voltages of the IBRs, violating the limits provided by grid standards, e.g., IEEE 1547\cite{IEEEStd1547}. Therefore, in many applications, constraining the individual voltages within limits (\textit{voltage containment}) seems to be a more practical objective \cite{Mohiuddin2022,Ortmann2020,Renke2017,Mohiuddin2020}. In\cite{Mohiuddin2022,Ortmann2020}, the problem is formulated as an optimization problem, and some controllers based on the primal-dual gradient method are developed. However, these methods require knowledge of the grid model and exchange a relatively large amount of information among the IBRs. In\cite{Renke2017}, along with a consensus-based control for reactive power sharing, a leader-follower voltage containment controller is proposed to force the voltages into a safe band imposed by some minimum and maximum ``leader'' IBRs. However, the accuracy of reactive power sharing and voltage containment in this method relies on the selection of the \textit{right} leaders; i.e., one must already know which IBRs take voltages closer to the minimum and maximum limits and select them as the leader IBRs. In another attempt to bound the voltages, \cite{Mohiuddin2020} introduces a voltage variance estimation and control loop to the scheme of \cite{Nasirian2016}. However, in this method, one ``special'' IBR is left out of the reactive power sharing task so that the other units can reach simultaneous accurate reactive power sharing and bounded voltages.

Summarizing, we have observed the following research gaps. The works in\cite{Bidram2013,Shahab2020,Afshari2022,Babak2020,Chen2021,Guanglei2022,Schiffer2016,Fan2017,Weng2019,Li2021,Wong2020,Wong2021,Zhou2020} have studied \textit{either} regulation of the individual IBR voltages \textit{or} reactive power sharing, but not both, while none of the papers in\cite{Bidram2013,Shahab2020,Afshari2022,Babak2020,Chen2021,Guanglei2022,Schiffer2016,Fan2017,Weng2019,Li2021,Wong2020,Wong2021,Zhou2020,Bidram2014,Habibi2023,Choi2021,Raeispour2021,Ge2021,Porco2015,Lai2016,Lu2018,Wang2019,Nasirian2016,Shafiee2018,Zhou2018,Shi2020,Shi2021} have considered the operational IBR voltage limits. The accuracy of reactive power sharing and voltage regulation/containment under the proposed controllers in \cite{Bidram2014,Habibi2023,Choi2021,Raeispour2021,Ge2021,Porco2015,Mohiuddin2020,Renke2017,Mohiuddin2022,Ortmann2020}, depends strongly on the choice of control parameters such that if not properly designed, even when steady-state reactive power sharing under voltage limits is possible, these controllers may not provide it. Finally, a rigorous study of stability and synchronization of the power network considering the coupling between angle and voltage dynamics is absent in the above works.

\subsection{Contributions}
\label{Subsec:Contribution}
To address the observed research gap, we propose a distributed control scheme for IBRs to \textit{simultaneously} achieve voltage containment and reactive power sharing. The main contributions made in this paper are as follows.
\textbf{\textit{C1)}} In our proposed method, we make use of a \textit{distributed primal-dual optimizer} to generate a globally-unique setpoint to be tracked by a purely local nonlinear integral controller that regulates the IBR's reactive power and tunes its voltage setpoint. This architecture allows maintaining the \textit{user-defined} voltage constraints, not only in steady state but \textit{at all points in time} while ensuring that the reactive power demand is shared among the IBRs with a \textit{high accuracy}. If the above-described reactive power sharing is not possible due to saturation of the voltages, then our controller excludes only the IBRs with saturated voltages from the reactive power sharing task and allows the other IBRs, which are operating away from the voltage limits, to reach a high-accuracy reactive power sharing; i.e., the controller prioritizes voltage containment over reactive power sharing but does not punish all the IBRs. \textbf{\textit{C2)}} We analyze the system's steady state using graph theory and state its properties. Considering the coupling between voltage and angle dynamics and the voltage saturation, we rigorously study the stability of the system using the Lyapunov method (as recommended in\cite{IEEEMGTF}). To this aim, we consider a timescale separation between the dynamics of the primal-dual optimizer and the voltage-angle dynamics, conduct a singular perturbation analysis, and find the stability conditions. We also provide some practical insights into the selection of the control parameters based on the IEEE 1547 standard\cite{IEEEStd1547} and the stability analysis.
\textbf{\textit{C3)}} To validate our findings, we adapt the proposed scheme to two test systems, simulated in the MATLAB/Simulink environment. One of the systems is based on a subnetwork of the CIGRE benchmark medium-voltage distribution network.

Our first attempt to address the observed research gap was presented in\cite{BabakSEST}, where we introduced a preliminary version of our control architecture. In this paper, we extend the work in\cite{BabakSEST} in the following ways. First, we include a leakage term in the local integrator channel to provide anti-wind-up action. Second, we reformulate the selection of the integrator setpoint as an optimization problem. Third, we add a formal stability proof and a parameter selection guideline. Finally, we add a new simulation case study based on a low-voltage microgrid.

The rest of the paper is structured as follows. Section~\ref{Sec:Model} contains the system modeling and problem statement. In Section~\ref{Sec:Controller}, we introduce our proposed control scheme. We conduct steady state and stability analyses of the closed-loop system in Section~\ref{Sec:SteadyStateStabilityParameter}, where we also provide parameter selection guidelines. In Section~\ref{Sec:CaseStudy}, we present and discuss the simulation results for different case studies. Finally, Section~\ref{Sec:Conclusion} concludes the paper.\vspace{-.1in}
\section{System Modeling, Power Sharing Definition, and Droop Control Behavior}
\label{Sec:Model}
\subsection{Inverter-Based Electric Power Network}
\label{Subsec:SystemModel}
Under the hierarchical control policy\cite{IEEEMGTF,KhayatReview}, the innermost control loops of inverters are tasked with controlling the LC filter’s inductor current and capacitor voltage by generating proper switching signals (see Fig.~\ref{Fig:IBR}).
While different inner loop designs have been proposed, all are designed to act \textit{very fast}, such that the subsystem denoted by red dashed lines in Fig.~\ref{Fig:IBR} has a high bandwidth\cite{KhayatReview}; virtual impedance control can also be embedded in this subsystem to provide additional decoupling between active and reactive powers and improve system performance\cite{Wu2017VI}. For example, in our simulation case studies in Section \ref{Sec:CaseStudy}, we use the cascaded control structure described in\cite{Wu2017VI} and references therein. We also assume that the IBRs use the well-known droop control\cite{Wu2017VI} or equivalently virtual synchronous machine (VSM) control technique\cite{DArco2014} as their primary controller, which operates slowly compared with the internal control loops.
\begin{figure}
\centering
\includegraphics[width=\columnwidth]{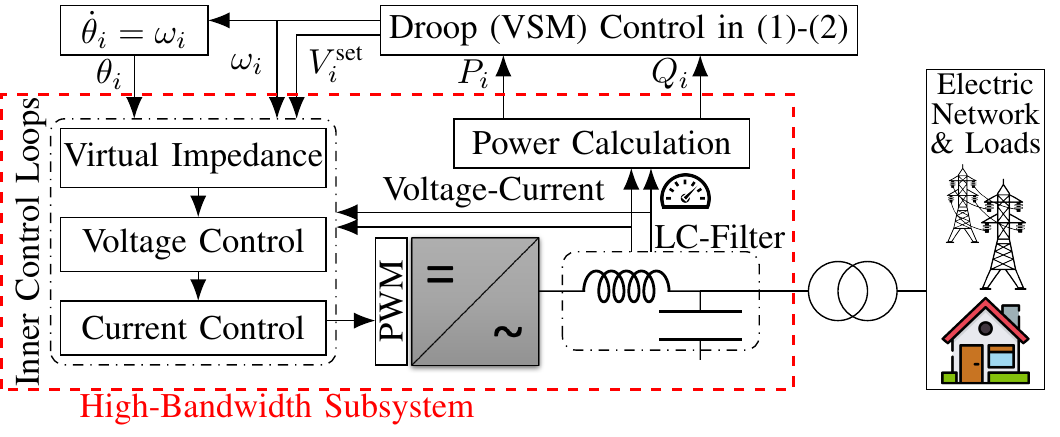}
\caption{An inverter-based resource (IBR), governed by the primary control in (1). The high-bandwidth subsystem is denoted by dashed lines. The detailed low-level control structure used in this paper can be found in, e.g.,\cite{Wu2017VI}.\label{Fig:IBR}\vspace{-.1in}}
\end{figure}

In a multi-vendor power system, however, the detailed structure and dynamics of the fast internal controllers are not easily accessible. Therefore, for high-level control design and stability studies, it is preferable to use a simplified generic model for each primary-controlled IBR\cite{WeiDu2021,Mobin2021,Babak2020TIE}. For the $i$th IBR we will use the model
\begin{IEEEeqnarray}{rCl}
\label{eq:AngleDynamics}
\dot{\theta}_i &=& \omega_i=\omega_{\rm nom}+\Omega_i\IEEEyesnumber\IEEEyessubnumber\label{eq:Freq}\\
\tau_\Omega\dot{\Omega}_i &=&-\Omega_i-  m_i^\omega P_i/S_i^{\rm rated}\IEEEyessubnumber\label{eq:FDroop}\\
\label{eq:VoltageDynamics}
V_i &=&V_i^{\rm set}= V_{\rm nom}+v_i\IEEEyesnumber\IEEEyessubnumber\label{eq:Voltage}\\
\tau_v\dot{v}_i &=&- v_i-  m_i^V Q_i/S_i^{\rm rated}\IEEEyessubnumber\label{eq:VDroop}
\end{IEEEeqnarray}
where $\theta_i$ and $\omega_i$ are the phase angle and angular frequency of the IBR, $V_i$ and $V_i^{\rm set}$ are the IBR voltage and its setpoint, and $\omega_{\rm nom}$ and $V_{\rm nom}$ are the nominal frequency and voltage. The state variables $\Omega_i$ and $v_i$ are the frequency and voltage deviations induced by droop (VSM) controllers in \eqref{eq:FDroop} and \eqref{eq:VDroop}, respectively. The constants $\tau_\Omega$ and $\tau_v$ are the frequency and voltage time constants, respectively. The constants $m_i^\omega$ and $m_i^V$ are the IBR's frequency and voltage droop coefficients, respectively. The apparent power $S_i^{\rm rated}$ is the rated capacity of the IBR while $P_i$ and $Q_i$ are respectively its active and reactive power injections, which are related to $\theta=(\theta_1,\ldots,\theta_n)$ and $V=(V_1,\ldots,V_n)$ through the following power flow equations
\begin{IEEEeqnarray}{rcl}
\label{eq:PowerFlowEquations}
P_i=f_i^P(\theta,V) &=& {\sum}_{j=1}^n V_iV_j \big( G_{ij}\cos(\theta_{ij})+B_{ij}\sin(\theta_{ij}) \big)\quad\,\,\IEEEyesnumber\IEEEyessubnumber\label{eq:PFlow}\\
Q_i=f_i^Q(\theta,V) &=& {\sum}_{j=1}^n V_iV_j \big( G_{ij}\sin(\theta_{ij})-B_{ij}\cos(\theta_{ij}) \big)\quad\,\,\IEEEyessubnumber\label{eq:QFlow}
\end{IEEEeqnarray}
where $\theta_{ij}=\theta_i-\theta_j$ is the phase difference between IBRs $i$ and $j$; $G_{ij}$ and $B_{ij}$ are the elements of the network's reduced conductance and susceptance matrices\cite[Ch. 6.4]{Kundur1994}.\vspace{-.1in}

\subsection{Power Sharing Definition and Review of Droop Control}
\label{Subsec:Problem}
In this subsection, we define power sharing among IBRs and review the steady-state behavior of droop control. As notation, for any variable $x$, let $\bar{x}$ denote its steady-state value.

\begin{definition}
\label{Def:PowerSharing}
The microgrid system \eqref{eq:AngleDynamics}--\eqref{eq:PowerFlowEquations} achieves reactive power sharing if $\bar{Q}_i/S_i^{\rm rated}=\bar{Q}_j/S_j^{\rm rated}=\alpha_Q$ for some $\alpha_Q$.
We define active power sharing similarly, using $P$ instead of $Q$.
\end{definition}

According to the droop control in \eqref{eq:FDroop} and \eqref{eq:VDroop} we have
\[
\bar{P}_i/S_i^{\rm rated}=-\bar{\Omega}_i/m_i^\omega, \qquad \bar{Q}_i/S_i^{\rm rated}=-\bar{v}_i/m_i^V.
\]
Since steady-state frequency is global, for every $i$ and $j$ we have $\bar{\Omega}_i=\bar{\Omega}_j$. Thus, following the conventional droop control design criteria\cite{Wu2017VI}, by selecting equal frequency droop coefficients for the IBRs, i.e., $m_i^\omega=m_j^\omega$, we have $\bar{P}_i/S_i^{\rm rated}=\bar{P}_j/S_j^{\rm rated}$ for every $i$ and $j$, i.e., the frequency droop controller \eqref{eq:FDroop} enforces proportional active power sharing. However, since $\bar{v}_i=\bar{v}_j$ for every $i$ and $j$ \textit{does not necessarily hold}, selecting $m_i^V=m_j^V$ does not guarantee $\bar{Q}_i/S_i^{\rm rated}=\bar{Q}_j/S_j^{\rm rated}$; i.e., the voltage droop controller \eqref{eq:VDroop} cannot enforce reactive power sharing in the same way. In what follows, we propose a distributed control scheme to provide reactive power sharing considering the IBRs voltage limits.\vspace{-.2in}
\section{Proposed Controller}
\label{Sec:Controller}
In this section, we introduce our proposed controller. The controller consists of two subsystems that will be introduced separately: \textit{a)} a nonlinear leaky integral controller for regulating reactive power ratios of the IBRs and maintaining the voltage limits, and \textit{b)} a distributed optimizer for obtaining the optimal setpoint for this integrator.\vspace{-.1in}
\subsection{Integral Reactive Power Regulation Under Voltage Limits}

Let $V_i^{\rm min}$ and $V_i^{\rm max}$ denote minimum and maximum the desired operational voltage limits for IBR $i$, with average value $V_i^\star=\tfrac{1}{2}(V_i^{\rm max}+V_i^{\rm min})$ and maximum allowable deviation $\Delta_i=\tfrac{1}{2}(V_i^{\rm max}-V_i^{\rm min})$ from that average. In place of the conventional voltage controller \eqref{eq:VoltageDynamics}, we propose the nonlinear integral controller
\begin{IEEEeqnarray}{rCl}
\label{eq:Regulator}
V_i &=&V_i^{\rm set}= V_i^\star+\Delta_i \tanh(v_i/\Delta_i),\IEEEyesnumber\IEEEyessubnumber\label{eq:ContVset}\\
\tau_v\dot{v}_i &=&V_i^\star(\lambda_i- Q_i/S_i^{\rm rated})-\beta\Delta_i \tanh(v_i/\Delta_i)-\rho_i(v_i) v_i,\,\,\,\,\quad\IEEEyessubnumber\label{eq:ContRegualtor}
\end{IEEEeqnarray}
 where $v_i$ is the state variable of the integrator \eqref{eq:ContRegualtor} with time constant $\tau_v$, and where $\beta>0$ is sufficiently small. The variable $\lambda_i$ is a setpoint for the utilization ratio $Q_i/S_i^{\rm rated}$, obtained by the optimizer, which will be subsequently described in \eqref{eq:PrimalDualDynamics}, in the next subsection. The non-negative function $\rho_i(v_i)$ is a nonlinear \textit{leakage} coefficient, defined as
 \begin{IEEEeqnarray}{rCl}
\rho_i(v_i)&=& \begin{cases}
|v_i/\Delta_i|-3&\text{if }|v_i|> 3\Delta_i\\
0 &\text{otherwise}.
\end{cases}\quad\IEEEyessubnumber\label{eq:rho}     
 \end{IEEEeqnarray}
 The main ideas behind the controller \eqref{eq:Regulator} are as follows.
\begin{itemize}[leftmargin=*]
    \item Since $\tanh$ is bounded between $-1$ and $1$, \eqref{eq:ContVset} ensures that $V_i^{\rm min}< V_i < V_i^{\rm max}$ at all points in time. In other words, voltage containment is achieved by construction.\footnote{As we will see in the stability analysis, the use of a smooth hyperbolic tangent instead of the standard saturation function, allows us to define a positive-definite Lyapunov function and facilitates the stability analysis under voltage constraints.}
    \item The first term in \eqref{eq:ContRegualtor} provides integral action for the utilization ratio $Q_i/S_i^{\rm rated}$ to track the provided setpoint $\lambda_i$. The (small) term $\beta\Delta_i \tanh(v_i/\Delta_i)$ provides damping, which will assist in our subsequent stability analysis.
    \item The nonlinear gain $\rho_i(v_i)$ in \eqref{eq:rho} prevents integrator wind-up when $|v_i| > 3\Delta_i$. The particular choice of the constant $3$ is because $|{\rm tanh}(\pm 3)| \approx 0.995$ and ${\rm tanh}(v_i/\Delta_i)$ does not change significantly for $|v_i| > 3\Delta_i$. In words, roughly speaking, for $|v_i| > 3\Delta_i$ the voltages are saturated with an acceptable accuracy.\vspace{-.1in}
\end{itemize}

\subsection{Distributed Optimization of the Integrator Setpoint $\lambda_i$}
\label{Subsec:Distributed}

In \eqref{eq:ContRegualtor}, $\lambda_i$ acts as a setpoint for the utilization ratio $Q_i/S_i^{\rm rated}$. By Definition \ref{Def:PowerSharing}, reactive power sharing will be achieved if the equilibrium values $\bar{\lambda}_{i}$ are equal, i.e., if $\bar{\lambda}_{i} = \bar{\lambda}_{j}$ for all IBRs $i$ and $j$. We now discuss the optimal selection $\bar{\lambda}_i$ for this setpoint and introduce a distributed algorithm for its online computation.

Following the above discussion, the optimal setpoint selection $\bar{\lambda}_i$ can be formulated via the following optimization problem
\begin{IEEEeqnarray}{c}
\label{eq:MultiVarProgram0}
\min_{\bar{ \lambda}_i}\,{\sum}_{i=1}^n \tfrac{1}{2}\left(\bar{\lambda}_i-\bar{ Q }_i/S_i^{\rm rated}\right)^2\IEEEyesnumber\IEEEyessubnumber\\
\text{subject to }
0 = \bar{\lambda}_i - \bar{\lambda}_j,\,\, \forall i,j.\IEEEyessubnumber\label{eq:ConsensusConstraints0}
\end{IEEEeqnarray}

We will be seeking a distributed online solution to this optimization problem. To this end, we assume that the IBRs can exchange information over a communication network modeled with an \textit{undirected (bidirectional)} and connected communication graph; see Appendix \ref{Appendix:Graph} for more info on graph theory. With $a_{ij}$ denoting the elements of the adjacency matrix and $N_i$ the set of neighbours of IBR $i$, the problem \eqref{eq:MultiVarProgram0} is equivalent to
\begin{IEEEeqnarray}{c}
\label{eq:MultiVarProgram}
\min_{\bar{\lambda}_i}\,{\sum}_{i=1}^n \tfrac{1}{2}\Big ( (\bar{\lambda}_i-\frac{\bar{ Q }_i}{S_i^{\rm rated}})^2+\tfrac{k}{2}{\sum}_{ i,j=1}^na_{ij}(\bar{\lambda}_i-\bar{\lambda}_j)^2 \Big),\quad\IEEEyesnumber\IEEEyessubnumber\label{eq:DistProgramm}\\
\text{subject to }
\bar{ z }_i= {\sum}_{j\in N_i} a_{ij}(\bar{\lambda}_i-\bar{\lambda}_j)=0,\,\, \forall i,\IEEEyessubnumber\label{eq:ConsensusConstraints}
\end{IEEEeqnarray}
where $k > 0$. The constraint \eqref{eq:ConsensusConstraints} implies that $\bar{\lambda}_i=\bar{\lambda}_j$ for all $i$ and $j$. We define the Lagrangian associated with the problem \eqref{eq:MultiVarProgram} as
\[
\mathbb{L}(\bar{ \lambda}_1,\bar{ \zeta }_1,\ldots,\bar{ \lambda}_n,\bar{ \zeta }_n) = C(\bar{ \lambda}_1,\ldots,\bar{ \lambda}_n)+ {\sum}_{i=1}^n \bar{\zeta}_i \bar{z}_i,\IEEEnonumber
\]
where $C(\bar{ \lambda}_1,\ldots,\bar{ \lambda}_n)$ is the total cost function used in \eqref{eq:DistProgramm} and $\bar{\zeta}_i$ is the Lagrange multiplier associated with the constraint $\bar{z}_i=0$. The problem \eqref{eq:MultiVarProgram} is a quadratic minimization program with linear constraints; hence, Slater's condition holds, and KKT conditions provide necessary and sufficient conditions for optimality\cite{Boyd}. In other words, $\bar{\lambda}_i$, $\bar{\zeta}_i$, and $\bar{z}_i$ are optimal if and only if they satisfy the KKT conditions \cite[Ch. 5.5]{Boyd}
\begin{IEEEeqnarray}{rCl}
\label{eq:KKTScalar}
0&=&\bar{\lambda}_i - \tfrac{\bar{ Q }_i}{S_i^{\rm rated}} + \sum_{j\in N_i} a_{ij}(\bar{ \zeta }_i-\bar{ \zeta }_j) +k \sum_{j\in N_i} a_{ij}(\bar{\lambda}_i-\bar{\lambda}_j), \qquad\IEEEyesnumber\IEEEyessubnumber\label{eq:KKTScalarStationary}\\
0&=&\bar{ z }_i =  {\sum}_{j\in N_i} a_{ij}(\bar{\lambda}_i-\bar{\lambda}_j).\IEEEyessubnumber\label{eq:KKTScalarConsesnus}
\end{IEEEeqnarray}
The solution $(\bar{\lambda}_i,\bar{\zeta}_i)$ of \eqref{eq:KKTScalar} can be computed in a distributed manner via the so-called primal-dual dynamics \cite{Cherukuri2016}
\begin{IEEEeqnarray}{rCl}
\label{eq:PrimalDualDynamics}
\tau_p\dot{\lambda}_i&=& \tfrac{Q_i}{S_i^{\rm rated}} -\lambda_i - \!\!\sum_{j\in N_i}a_{ij}(\zeta_i-\zeta_j) + k\!\!\sum_{j\in N_i}a_{ij}(\lambda_j-\lambda_i), \qquad\IEEEyesnumber\IEEEyessubnumber\label{eq:PrimalDynamics}\\
\tau_d\dot{\zeta}_i&=&  {\sum}_{j\in N_i}a_{ij}(\lambda_i-\lambda_j),\IEEEyessubnumber\label{eq:DualDynamics}
\end{IEEEeqnarray}
where $\lambda_i$ and $\zeta_i$ are now dynamic state variables which are exchanged between neighboring IBRs in real-time. The parameters $\tau_p$ and $\tau_d$ are the primal and dual dynamics time constants, which for our purposes are tunable gains. 
%
%

To summarize the overall control architecture: the subsystem \eqref{eq:PrimalDualDynamics} generates the setpoint $\lambda_i$ to be tracked by the regulator \eqref{eq:ContRegualtor}, while the regulator \eqref{eq:ContRegualtor} generates the voltage setpoint \eqref{eq:ContVset} that is saturated within limits; \eqref{eq:ContRegualtor} also provides an anti-wind-up function through the leakage term $\rho_i(v_i)v_i$ when necessary. The general scheme of the proposed controller is shown in Fig.~\ref{Fig:Controller}.\vspace{-.1in}
\begin{figure}
    \centering
    \includegraphics[width=\columnwidth]{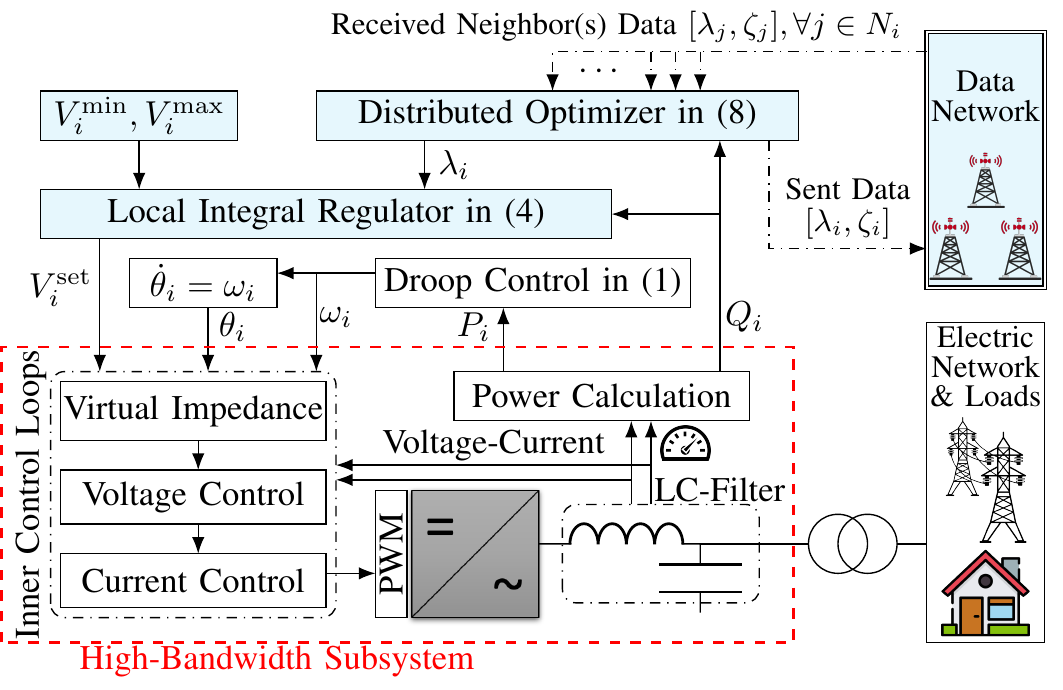}
    \caption{An IBR under the proposed controller.\label{Fig:Controller}\vspace{-.2in}}
\end{figure}


\section{Steady-State, Stability Analysis, and Controller Gain Selection}
\label{Sec:SteadyStateStabilityParameter}

The closed-loop system consists of the angle dynamics \eqref{eq:AngleDynamics}, the voltage controller \eqref{eq:Regulator} and \eqref{eq:PrimalDualDynamics}, and the power grid model \eqref{eq:PowerFlowEquations}. In this section, we analyze the steady state of the closed-loop system, study its stability, and state its properties. We also give some insights on the selection of the control parameters.\vspace{-.1in}

\subsection{System Steady State and its Properties}

We begin by writing the system dynamics in a compact form. Let $x = \mathrm{col}(x_1,\ldots,x_n)$ denote the column vector composed of elements $x_1,\ldots,x_n$. We define $m={\rm diag}(m_1^\omega,\ldots,m_n^\omega)$, $S={\rm diag}(S_1^{\rm rated},\ldots,S_n^{\rm rated})$, and $\rho (v) ={\rm diag}(\rho_1(v_1),\ldots,\rho_n(v_n))$ as well. With this, we can write the differential equations of \eqref{eq:AngleDynamics}, \eqref{eq:Regulator}, and \eqref{eq:PrimalDualDynamics} in the  compact form
\begin{IEEEeqnarray}{rCl}
\label{eq:CompactSystem}
\dot{\theta} &=& \omega=\omega_{\rm nom}1_n+\Omega,\IEEEyesnumber\IEEEyessubnumber\label{eq:AngleCompact}\\
\tau_\Omega\dot{\Omega} &=&-\Omega- m S^{-1} P,\IEEEyessubnumber\label{eq:FeqCompact}\\
\tau_v\dot{v} &=&-\rho (v) v-\beta \Delta\boldsymbol{\tanh}( \Delta^{-1}v )+[V_\star] (\lambda- S^{-1}Q ),\quad\IEEEyessubnumber\label{eq:IntCompact}\\
\tau_p\dot{\lambda}&=&-k \mathcal{L}\lambda- \mathcal{L}\zeta + (S^{-1}Q-\lambda) ,\IEEEyessubnumber\label{eq:PrimalCompact}\\
\tau_d\dot{\zeta}&=&\mathcal{L}\lambda,\IEEEyessubnumber\label{eq:DualCompact}
\end{IEEEeqnarray}
where $\mathcal{L}$ is the Laplacian matrix of the communication graph defined in Appendix~\ref{Appendix:Graph}, $1_n={\rm col}(1,\ldots,1)\in \mathbb{R}^n$, $V_\star={\rm col}(V_1^\star,\ldots,V_n^\star)$, $\Delta={\rm diag}(\Delta_1,\ldots,\Delta_n)$, $\boldsymbol{\tanh}(x)={\rm col}(\tanh(x_1),\ldots,\tanh(x_n))$, and $[V_\star]={\rm diag}(V_1^\star,\ldots,V_n^\star)$. We can also compactly write the power flow equations \eqref{eq:PowerFlowEquations} and the voltage \eqref{eq:ContVset} as
\begin{IEEEeqnarray}{rCl}
\label{eq:AlgebraicCompact}
P &=& f_P(\theta,V)={\rm col}(f_1^P(\theta,V),\ldots,f_n^P(\theta,V)),\IEEEyesnumber\IEEEyessubnumber\label{eq:PFlowCompact}\\
Q &=& f_Q(\theta,V)={\rm col}(f_1^Q(\theta,V),\ldots,f_n^Q(\theta,V)),\IEEEyessubnumber\label{eq:QFlowCompact}\\
V &=& V_\star + \Delta\boldsymbol{\tanh}( \Delta^{-1}v ).\IEEEyessubnumber\label{eq:VoltageCompact}
\end{IEEEeqnarray}

Out first result describes equilibrium points of \eqref{eq:CompactSystem}-\eqref{eq:AlgebraicCompact}.

\begin{lemma}[Steady State]
\label{Lemma:SS}

Consider the system \eqref{eq:CompactSystem}-\eqref{eq:AlgebraicCompact} and suppose that the ac power network has a synchronization frequency of $\omega_{\rm syn}$. Then any steady state of the system satisfies
\begin{IEEEeqnarray}{rCl}
\label{eq:SteadyState}
\dot{\bar{\theta}} &=& \omega_{\rm syn} 1_n=\omega_{\rm nom}1_n +\bar{\Omega},\IEEEyesnumber\IEEEyessubnumber\label{eq:SSAngle}\\
0_n &=& -\bar{\Omega}- m S^{-1} \bar{P},\IEEEyessubnumber\label{eq:SSFeq}\\
0_n &=&-\rho (\bar{v}) \bar{v}-\beta \Delta\boldsymbol{\tanh}( \Delta^{-1}\bar{v} )+[V_\star] (\bar{\lambda}- S^{-1}\bar{Q} ),\quad\IEEEyessubnumber\label{eq:SS}\\
0_n &=&  -k\mathcal{L}\bar{\lambda} -\mathcal{L}\bar{\zeta} + (S^{-1}\bar{Q}-\bar{\lambda}) ,\IEEEyessubnumber\label{eq:SSPrimal}\\
0_n &=& \mathcal{L}\bar{\lambda},\IEEEyessubnumber\label{eq:SSDual}
\end{IEEEeqnarray}
where $0_n={\rm col}(0,\ldots,0)\in \mathbb{R}^n$, $\bar{P} = f_P(\bar{\theta},\bar{V})$, $\bar{Q} = f_Q(\bar{\theta},\bar{V})$, and $\bar{V} = V_\star + \Delta\boldsymbol{\tanh}\big( \Delta^{-1}\bar{v} \big)$. Moreover, if $m_i^\omega=m_\star$ for some $m_{\star} > 0$ and all $i$, then
\begin{IEEEeqnarray}{rCll}
\label{eq:SteadyStateRelationships}
\bar{\Omega} &=& - m_\star \alpha_P 1_n, &\quad\text{ where } \alpha_P=(\tfrac{1}{n}1_n^\top S^{-1} \bar{P}) \in \mathbb{R},\quad\IEEEyesnumber\IEEEyessubnumber\label{eq:SSOmega}\\
\bar{\lambda} &=&  \alpha_Q 1_n,  &\quad\text{ where } \alpha_Q=(\tfrac{1}{n}1_n^\top S^{-1} \bar{Q}) \in \mathbb{R}.\quad\IEEEyessubnumber\label{eq:SSLambda}
\end{IEEEeqnarray}
\end{lemma}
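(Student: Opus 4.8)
The plan is to obtain the equilibrium conditions directly from the definition of a synchronized steady state, and then to read off the two sharing relationships from the kernel structure of the droop matrix and of the communication Laplacian. For the first claim, I would recall that synchronization at frequency $\omega_{\rm syn}$ means every angle advances at the common rate $\dot{\bar\theta}_i=\omega_{\rm syn}$, so that the phase differences $\bar\theta_{ij}$ — and hence the injections $\bar P=f_P(\bar\theta,\bar V)$ and $\bar Q=f_Q(\bar\theta,\bar V)$ — are constant in time. Substituting $\dot{\bar\theta}=\omega_{\rm syn}1_n$ into \eqref{eq:AngleCompact} gives \eqref{eq:SSAngle}, while setting the remaining state derivatives $\dot\Omega,\dot v,\dot\lambda,\dot\zeta$ to zero in \eqref{eq:FeqCompact}--\eqref{eq:DualCompact} yields \eqref{eq:SSFeq}--\eqref{eq:SSDual} verbatim. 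This establishes the first part with essentially no computation.

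For the relationship \eqref{eq:SSOmega}, I would first note from \eqref{eq:SSAngle} that $\bar\Omega=(\omega_{\rm syn}-\omega_{\rm nom})1_n$, so $\bar\Omega$ lies in $\mathrm{span}\{1_n\}$; that is, the frequency deviation is common to all IBRs. With $m_i^\omega=m_\star$ we have $m=m_\star I$, so \eqref{eq:SSFeq} reads $S^{-1}\bar P=-m_\star^{-1}\bar\Omega$, which is therefore also a scalar multiple of $1_n$. Left-multiplying by $\tfrac1n 1_n^\top$ identifies that scalar as $\alpha_P$, i.e.\ $S^{-1}\bar P=\alpha_P 1_n$, and back-substitution gives $\bar\Omega=-m_\star\alpha_P 1_n$, which is \eqref{eq:SSOmega}.

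The relationship \eqref{eq:SSLambda} uses the connectivity assumption. Since the communication graph is undirected and connected, $\mathcal L$ is symmetric with $\ker\mathcal L=\mathrm{span}\{1_n\}$; hence \eqref{eq:SSDual}, $\mathcal L\bar\lambda=0$, forces $\bar\lambda=c\,1_n$ for some scalar $c$ (this is exactly the consensus $\bar\lambda_i=\bar\lambda_j$). To pin down $c$, I would pre-multiply \eqref{eq:SSPrimal} by $1_n^\top$: the terms $1_n^\top\mathcal L\bar\zeta$ and $k\,1_n^\top\mathcal L\bar\lambda$ vanish because $1_n^\top\mathcal L=(\mathcal L 1_n)^\top=0$, leaving $0=1_n^\top S^{-1}\bar Q-c\,1_n^\top 1_n$, so that $c=\tfrac1n 1_n^\top S^{-1}\bar Q=\alpha_Q$ and $\bar\lambda=\alpha_Q 1_n$.

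The only genuinely conceptual point — and thus the main thing to get right — is the treatment of the angle subsystem: a steady state here is a \emph{synchronized} trajectory with $\dot{\bar\theta}=\omega_{\rm syn}1_n\neq 0_n$, not a fixed point of $\theta$, and the argument relies on $f_P,f_Q$ depending on $\theta$ only through the differences $\theta_{ij}$, so that constant phase differences give constant injections. Once this is granted, the remainder is linear algebra resting on two standard facts: $m=m_\star I$ under equal droop coefficients, and $\ker\mathcal L=\mathrm{span}\{1_n\}$ together with $1_n^\top\mathcal L=0$ for a connected undirected graph. No inequality, positivity, or existence argument is needed, since the lemma only characterizes — rather than constructs — the equilibria.
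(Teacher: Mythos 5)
Your proposal is correct and follows essentially the same route as the paper's proof: setting all state derivatives to zero except $\dot{\bar\theta}=\omega_{\rm syn}1_n$, then using $\ker\mathcal{L}=\mathrm{span}\{1_n\}$ for the connected undirected graph and left-multiplication by $1_n^\top$ (with $1_n^\top\mathcal{L}=0_n^\top$) to pin down $\bar\Omega$ and $\bar\lambda$. Your explicit remark that the equilibrium is a synchronized trajectory rather than a fixed point of $\theta$, justified by $f_P,f_Q$ depending on $\theta$ only through the differences $\theta_{ij}$, is left implicit in the paper but is the same underlying idea.
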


\begin{proof}
If the ac network has a synchronization frequency of $\omega_{\rm syn}$, then we have $\dot{\bar{\theta}}=\omega_{\rm syn} 1_n$. Setting this equality together with $\dot{x}=0_n$ for any other variable $x$ in \eqref{eq:CompactSystem}, we can simply derive the steady-state equations \eqref{eq:SteadyState}. Next, we prove \eqref{eq:SteadyStateRelationships}. According to \eqref{eq:SSAngle}, we have $\bar{\Omega}=(\omega_{\rm syn}-\omega_{\rm nom})1_n$. Multiplying this equation by $1_n^\top$, we get $\omega_{\rm syn}-\omega_{\rm nom}=\tfrac{1}{n}1_n^\top\bar{\Omega}$ and hence $\bar{\Omega}=\tfrac{1}{n}1_n1_n^\top\bar{\Omega}$. On the other hand, from \eqref{eq:SSFeq} we have $\bar{\Omega}=- m_\star S^{-1} \bar{P}$, where we used $m_i^\omega=m_\star$ for all $i$. Using the last two equations, we can derive \eqref{eq:SSOmega}. By connectivity of the communication graph, every solution of equation \eqref{eq:SSDual} has the form $\bar{\lambda} = \alpha_Q 1_n$ for some $\alpha_Q\in \mathbb{R}$\cite[Ch. 6]{FB-LNS}. Since the graph is undirected, we also have $1_n^\top \mathcal{L}=0_n$\cite[Ch. 6]{FB-LNS}; multiplying \eqref{eq:SSPrimal} by $1_n^\top$ and using this property we get $1_n^\top \bar{\lambda}=1_n^\top S^{-1}\bar{Q}$. Setting $\bar{\lambda} = \alpha_Q 1_n$ in this equation, we can finally arrive at \eqref{eq:SSLambda}.
\end{proof}


Based on Lemma \ref{Lemma:SS}, we can now state several practically important properties of the steady state enforced by our controller.

\begin{proposition}[Steady State Properties]
\label{Proposition:SSProperties}

Consider a steady state as given in \eqref{eq:SteadyState}, let $\mathcal{N}$ denote the set of all the IBRs, and define the set of voltage-saturated IBRs as
\[
\mathcal{N}_{\rm sat} = \{i\in\mathcal{N} \,|\, \rho_i(\bar{v}_i)>0\}.
\]
If $m_i^\omega=m_\star$ for all $i \in \mathcal{N}$, then the steady state described by Lemma~\ref{Lemma:SS} has the following properties:
\begin{enumerate}[leftmargin=*]
    \item \textbf{Active Power Sharing and Frequency Regulation:} Active power sharing is achieved among all the IBRs and the microgrid's synchronization frequency is $\omega_{\rm syn}=\omega_{\rm nom} - m_\star \tfrac{1}{n}1_n^\top S^{-1}\bar{P}$.
    \item \textbf{Voltage Containment:} The steady-state voltages are all in the safe range, i.e., $\bar{V}_i \in (V_i^{\rm min},V_i^{\rm max})$ for all $i\in\mathcal{N}$.
    \item \textbf{Global Reactive Power Sharing:} If $\mathcal{N}_{\rm sat}=\emptyset$, then
    \[
    |\bar{Q}_i/S_{i}^{\rm rated}-\alpha_{Q}| = \beta |1-\bar{V}_i/V_i^{\star}|, \qquad \forall i\in\mathcal{N},
    \]
    i.e., the IBRs achieve reactive power sharing with a small error proportional to $\beta$.
    \item \textbf{Partial Reactive Power Sharing:} If $\mathcal{N}_{\rm sat}\neq\emptyset$, then
    \[
    |\bar{Q}_i/S_{i}^{\rm rated}-\alpha_{Q}| = \beta |1-\bar{V}_i/V_i^{\star}|, \qquad \forall i\notin\mathcal{N}_{\rm sat},\]
    \[
    |\bar{Q}_i/S_{i}^{\rm rated}-\alpha_{Q}| \leq \beta |1-\bar{V}_i/V_i^{\star}|+\rho_i (\bar{v}_i)|\bar{v}_i/V_i^\star|,\; \forall i\in\mathcal{N}_{\rm sat}.
    \]
    i.e., only the IBRs that are not in $\mathcal{N}_{\rm sat}$ achieve the described almost accurate reactive power sharing, and for the IBRs that belong to $\mathcal{N}_{\rm sat}$, the sharing accuracy decreases (deteriorates) as $\rho_i(\bar{v}_i)$ increases.
\end{enumerate}
\end{proposition}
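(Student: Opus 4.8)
The plan is to derive all four properties directly from the steady-state characterization in Lemma~\ref{Lemma:SS}, principally from the integrator equilibrium \eqref{eq:SS} read component-wise, together with the consensus relations \eqref{eq:SteadyStateRelationships} and the voltage map \eqref{eq:VoltageCompact}.

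For property (1), I would simply invoke \eqref{eq:SSOmega}: since $\bar{\Omega}=-m_\star\alpha_P 1_n$ with $\alpha_P=\tfrac{1}{n}1_n^\top S^{-1}\bar{P}$, equation \eqref{eq:SSFeq} yields $\bar{P}_i/S_i^{\rm rated}=\alpha_P$ for all $i$, which is active power sharing by Definition~\ref{Def:PowerSharing}; combining $\bar{\Omega}=(\omega_{\rm syn}-\omega_{\rm nom})1_n$ from \eqref{eq:SSAngle} with \eqref{eq:SSOmega} then gives the stated synchronization frequency. Property (2) is immediate from \eqref{eq:VoltageCompact}: because $\tanh$ takes values strictly in $(-1,1)$ for any finite argument, each $\bar{V}_i=V_i^\star+\Delta_i\tanh(\bar{v}_i/\Delta_i)$ lies strictly between $V_i^\star-\Delta_i=V_i^{\rm min}$ and $V_i^\star+\Delta_i=V_i^{\rm max}$.

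The substance is in properties (3) and (4), both of which follow from reading \eqref{eq:SS} in its $i$th component. The single identity driving everything is $\Delta_i\tanh(\bar{v}_i/\Delta_i)=\bar{V}_i-V_i^\star$, obtained directly from \eqref{eq:VoltageCompact}, which converts the abstract $\tanh$ term into the physical voltage deviation. Using $\bar{\lambda}_i=\alpha_Q$ from \eqref{eq:SSLambda}, the $i$th row of \eqref{eq:SS} rearranges to
\[
V_i^\star\!\left(\bar{Q}_i/S_i^{\rm rated}-\alpha_Q\right)=-\rho_i(\bar{v}_i)\bar{v}_i-\beta(\bar{V}_i-V_i^\star).
\]
When $\mathcal{N}_{\rm sat}=\emptyset$ the leakage term vanishes for every $i$, so dividing by $V_i^\star>0$ and taking absolute values gives the exact equality in property (3). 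For property (4), the indices $i\notin\mathcal{N}_{\rm sat}$ again have $\rho_i(\bar{v}_i)=0$ and reproduce that same equality, whereas for $i\in\mathcal{N}_{\rm sat}$ I retain the leakage term, divide by $V_i^\star$, and apply the triangle inequality to bound $|\bar{Q}_i/S_i^{\rm rated}-\alpha_Q|$ above by $\beta|1-\bar{V}_i/V_i^\star|+\rho_i(\bar{v}_i)|\bar{v}_i/V_i^\star|$.

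There is no deep obstacle here; the argument is essentially algebraic bookkeeping layered on Lemma~\ref{Lemma:SS}. The one point needing care is recognizing that the structural distinction between the saturated and unsaturated cases is \emph{entirely} carried by the leakage term $\rho_i(\bar{v}_i)\bar{v}_i$: its vanishing, enabled by the consensus $\bar{\lambda}_i=\alpha_Q$, is precisely what delivers the clean $\beta$-proportional error for unsaturated units, and its presence is what degrades sharing for saturated ones. I would also note that the bound in property (4) is an upper estimate rather than an identity because the two right-hand contributions may partially cancel, which is consistent with the claim being stated as an inequality.
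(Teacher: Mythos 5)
Your proposal is correct and takes essentially the same route as the paper's proof: properties 1 and 2 are read directly from Lemma~\ref{Lemma:SS} and the strict bounds of $\tanh$, and properties 3 and 4 come from substituting $\bar{\lambda}=\alpha_Q 1_n$ into \eqref{eq:SS} together with the identity $\Delta_i\tanh(\bar{v}_i/\Delta_i)=\bar{V}_i-V_i^\star$, yielding the same componentwise equality $\bar{Q}_i/S_i^{\rm rated}=\alpha_Q-\beta(\bar{V}_i/V_i^\star-1)-\rho_i(\bar{v}_i)\bar{v}_i/V_i^\star$ that the paper derives. Your only addition --- making the triangle-inequality step explicit and noting that the bound for $i\in\mathcal{N}_{\rm sat}$ is an inequality because the two terms may partially cancel --- simply spells out what the paper leaves implicit.
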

\begin{proof}
According to \eqref{eq:SSOmega} and \eqref{eq:SSFeq}, we have $S^{-1}\bar{P}=\alpha_P 1_n$ and hence $\bar{P}_i/S_i^{\rm rated}=\bar{P}_j/S_j^{\rm rated}=\alpha_P$, for every $i$ and $j$, which according to Definition~\ref{Def:PowerSharing} underlines that active power sharing is achieved among all the IBRs. Inserting \eqref{eq:SSOmega} into \eqref{eq:SSAngle}, we can also write $\omega_{\rm syn}=\omega_{\rm nom} - m_\star \tfrac{1}{n}1_n^\top S^{-1}\bar{P}$, which proves property 1. The second property is obvious, as we have $-1 < \tanh(\cdot) < 1$. Next, we prove properties 3 and 4.

Inserting \eqref{eq:SSLambda} into \eqref{eq:SS}, and considering $\Delta\boldsymbol{\tanh}\big( \Delta^{-1}\bar{v} \big)=\bar{V} - V_\star$, we have
\begin{IEEEeqnarray}{rCl}
S^{-1}\bar{Q}&=& \alpha_Q 1_n -\beta [V_\star]^{-1}(\bar{V} - V_\star)-[V_\star]^{-1}\rho (\bar{v}) \bar{v},\qquad\IEEEyesnumber\IEEEyessubnumber\label{eq:SSURCompact}\\
\bar{Q}_i/S_i^{\rm rated} &=& \alpha_Q - \beta(\bar{V}_i/V_i^\star-1)-\rho_i (\bar{v}_i)\bar{v}_i/V_i^\star.\IEEEyessubnumber\label{eq:SSURScalar}
\end{IEEEeqnarray}
Now if $\mathcal{N}_{\rm sat}=\emptyset$, then for all $i$ we have $\rho_i (\bar{v}_i)=0$. From \eqref{eq:SSURScalar}, we can therefore write
    \begin{IEEEeqnarray}{rCl}
    \bar{Q}_i/S_i^{\rm rated} &=&\alpha_Q - \beta(\bar{V}_i/V_i^\star-1),\quad\forall i\in \mathcal{N},\quad\IEEEnonumber\label{eq:GlobalSharing}
    \end{IEEEeqnarray}
    which proves property 3. Using the definition of the set $\mathcal{N}_{\rm set}$, we can similarly write
    \begin{IEEEeqnarray}{rCl}
    \bar{Q}_i/S_i^{\rm rated} &=&\alpha_Q - \beta(\bar{V}_i/V_i^\star-1),\,\,\,\forall i\notin \mathcal{N}_{\rm sat},\qquad\IEEEnonumber\label{eq:PartialSharing}\\
    \bar{Q}_i/S_i^{\rm rated} &=& \alpha_Q - \beta(\bar{V}_i/V_i^\star-1)-\rho_i (\bar{v}_i)\bar{v}_i/V_i^\star,\,\,\,\forall i\in \mathcal{N}_{\rm sat},\quad\IEEEnonumber\label{eq:PoorSharing}
    \end{IEEEeqnarray}
    which, according to Definition~\ref{Def:PowerSharing}, proves property 4.\vspace{-.1in}
\end{proof}
\subsection{Stability Analysis}
We want to analyze the stability for the system \eqref{eq:CompactSystem}. To this end, we first take some steps to simplify the system dynamics and then analyze stability for the simplified version of \eqref{eq:CompactSystem}.

As our controller will maintain voltages within limits around their nominal values, it seems reasonable to use a linearized power flow model to describe the network behavior around the operating point.

\begin{assumption}
\label{Assumption:LinearPowerFlow}
Around a nominal operating point, the power flow equations in \eqref{eq:AlgebraicCompact} can be approximated by
\begin{IEEEeqnarray}{rCl}
\label{eq:LinearPowerFlow}
P &=& J_\theta^P \theta + J_V^P V + w_P,\IEEEyesnumber\IEEEyessubnumber\label{eq:PLinear}\\
Q &=& J_\theta^Q \theta + J_V^Q V + w_Q,\IEEEyessubnumber\label{eq:QLinear}
\end{IEEEeqnarray}
where $J_\theta^P$ and $J_V^P$ (resp. $J_\theta^Q$ and $J_V^Q$) are the $n\times n$ Jacobian matrices of $f_P(\theta,V)$ (resp. $f_Q(\theta,V)$) with respect to $\theta$ and $V$ at the linearization point, respectively; $w_P$ and $w_Q$ are the corresponding intercepts of the linear functions. The matrices $J_{\theta}^{P}$ and $J_{\theta}^{Q}$ each have an eigenvalue at $0$ with corresponding right eigenvector $1_{n}$.
\end{assumption}

We next reduce the order of the system and transform it into relative coordinates, which allows us to leverage \textit{singular perturbation analysis}\cite[Ch. 11]{Khalil} and find the stability conditions.

\subsubsection{\textbf{Model Reduction and Coordinate Transformation}}
As they are tunable control parameters, we can make the following assumption about the time constants $\tau_\Omega$, $\tau_p$, $\tau_d$, $\tau_v$ in \eqref{eq:CompactSystem}.
\begin{assumption}
\label{Assumption:FastLowPassFilter}
We have $\tau_\Omega,\tau_p<\!\!<\tau_v$ and $\tau_p<\!\!<\tau_d$.
\end{assumption}
According to the low-pass filters \eqref{eq:FeqCompact} and \eqref{eq:PrimalCompact}, we have $\Omega =- m S^{-1} P - \tau_\Omega\dot{\Omega}$ and $(I_n + k \mathcal{L})\lambda=- \mathcal{L}\zeta + S^{-1}Q - \tau_p\dot{\lambda}$. Under Assumption~\ref{Assumption:FastLowPassFilter}, the terms $\tau_\Omega\dot{\Omega}$ and $\tau_p\dot{\lambda}$ can be viewed as some negligible parasitic effects; therefore, the system dynamics are mainly governed by \eqref{eq:AngleCompact}, \eqref{eq:IntCompact}, and \eqref{eq:DualCompact}. Indeed, one may apply singular perturbation theory to rigorously reduce the order of the system dynamics to the dynamics of $\theta$, $v$, and $\zeta$ (for an example, see \cite{DorflerOverDamped}); instead, we omit the details and simply eliminate the left-hand sides of equations \eqref{eq:FeqCompact} and \eqref{eq:PrimalCompact} and consider $\Omega =- m S^{-1} P$ and $(I_n + k \mathcal{L})\lambda=- \mathcal{L}\zeta + S^{-1}Q$. Therefore, considering the linearized power flow equations in Assumption~\ref{Assumption:LinearPowerFlow}, the system \eqref{eq:CompactSystem} reduces to
\begin{IEEEeqnarray}{rCl}
\label{eq:SystemAfterAssumptions}
\dot{\theta} &=& \omega_{\rm nom}1_n + \Omega,\IEEEyesnumber\IEEEyessubnumber\label{eq:MainAngleCompact}\\
\tau_v\dot{v} &=&-\beta V + [V_\star] (\lambda- S^{-1}(J_\theta^Q \theta + J_V^Q V + w_Q) )\IEEEnonumber\\
&&-\rho (v) v + \beta V_\star,\IEEEyessubnumber\label{eq:MainIntCompact}\\
\varepsilon\dot{\zeta}&=&-\tau_v^{-1}\mathcal{L} \mathcal{K}\mathcal{L}\zeta +\tau_v^{-1}\mathcal{L} \mathcal{K}S^{-1}(J_\theta^Q \theta + J_V^Q V + w_Q),\qquad\IEEEyessubnumber\label{eq:MainDualCompact}\\
\Omega&=&- m S^{-1} (J_\theta^P \theta + J_V^P V + w_P),\IEEEyessubnumber\label{eq:OmegaCompact}\\
\lambda&=&- \mathcal{K}\mathcal{L}\zeta + \mathcal{K}S^{-1}(J_\theta^Q \theta + J_V^Q V + w_Q) ,\IEEEyessubnumber\label{eq:LambdaCompact}
\end{IEEEeqnarray}
where $\mathcal{K}=(I_n+k\mathcal{L})^{-1}$ and $\varepsilon=\tau_d/\tau_v$.

We now want to study the stability of the steady state for the system \eqref{eq:SystemAfterAssumptions} via \textit{singular perturbation analysis}. In particular, the analysis in \cite[Theorem 11.3]{Khalil} requires a system evolving on Euclidean space and an exponentially stable fixed point for the fast dynamics. In order to satisfy these requirements, we instead analyze a version of the system \eqref{eq:SystemAfterAssumptions} in which the system is transformed into relative coordinates.
Let us now define the change of coordinates $x_\theta=T\theta=[\theta_{\rm av}\:r_\theta^\top]^\top$ and $x_\zeta=T\zeta=[\zeta_{\rm av} \: r_\zeta^\top]^\top$, where $\theta_{\rm av}$ and $\zeta_{\rm av}$ are respectively the average values of the elements in $\theta$ and $\zeta$, the vectors $r_\theta$ and $r_\zeta$ belong to $\mathbb{R}^{n-1}$, and the transformation matrix $T\in \mathbb{R}^{n\times n}$ is
\begin{IEEEeqnarray}{c}
\label{eq:TMatrixProperties}
T = \begin{bmatrix}
1/n & 1/n & \ldots &1/n\\
-1 & 1 & & \\
& \ddots & \ddots & \\
 & & -1 & 1
\end{bmatrix},\quad T1_n=\begin{bmatrix}
1\\
0\\
\vdots\\
0
\end{bmatrix}\in\mathbb{R}^n.\IEEEyesnumber\IEEEyessubnumber\label{eq:TMatrix}\qquad
\end{IEEEeqnarray}
By Assumption \ref{Assumption:LinearPowerFlow} and connectivity of the communication graph, the matrices $J_\theta^P$, $J_\theta^Q$, and $\mathcal{L}$ satisfy $J_\theta^P 1_n=J_\theta^Q 1_n=\mathcal{L} 1_n=0_n$ \cite{SimpsonPorco2016}\cite[Ch. 6]{FB-LNS}.
Using $T1_n$ in \eqref{eq:TMatrix} and these properties, we compute that
\begin{IEEEeqnarray}{rCl}
T J_\theta^P T^{-1}&=&T_\theta^P=\begin{bmatrix}
0 & c_P^\top\\
0_{n-1} & J_{\theta{\rm red}}^P
\end{bmatrix},\IEEEyessubnumber\label{eq:TMatrixP}\\
T J_\theta^Q T^{-1}&=&T_\theta^Q=\begin{bmatrix}
0 & c_Q^\top\\
0_{n-1} & J_{\theta{\rm red}}^Q
\end{bmatrix},\IEEEyessubnumber\label{eq:TMatrixQ}\\
T \mathcal{L} T^{-1}&=&T_\zeta=\begin{bmatrix}
0 & c_\zeta^\top\\
0_{n-1} & \mathcal{L}_{\rm red}
\end{bmatrix},\IEEEyessubnumber\label{eq:TMatrixZeta}
\end{IEEEeqnarray}
for $J_{\theta{\rm red}}^P,J_{\theta{\rm red}}^Q,\mathcal{L}_{\rm red} \in \mathbb{R}^{(n-1)\times(n-1)}$ and $c_P,c_Q,c_\zeta \in\mathbb{R}^{n-1}$.
Let us now use $x_\theta=T\theta,\,x_\zeta=T\zeta$ and write the system dynamics \eqref{eq:SystemAfterAssumptions} in the new coordinates as
\begin{IEEEeqnarray}{rCl}
\label{eq:NewCoordinates}
\dot{x}_\theta &=&- Tm S^{-1}  T^{-1}T_\theta^P x_\theta - Tm S^{-1} J_V^P V \IEEEnonumber\\
&& + \omega_{\rm nom}T1_n  - Tm S^{-1} w_P,\qquad\IEEEyesnumber\IEEEyessubnumber\label{eq:NewAngle}\\
\tau_v\dot{v} &=&[V_\star](\mathcal{K}-I_n)S^{-1} T^{-1} T_\theta^Q x_\theta - [V_\star]\mathcal{K}T^{-1} T_\zeta x_\zeta  \IEEEnonumber\\
&& + [V_\star](\mathcal{K}-I_n)S^{-1} J_V^Q V -\beta V-\rho (v) v \IEEEnonumber\\
&&+ \beta V_\star + [V_\star](\mathcal{K}-I_n)S^{-1} w_Q,\IEEEyessubnumber\label{eq:NewVoltage}\\
\varepsilon\dot{x}_\zeta&=&\tau_v^{-1}T\mathcal{L} \mathcal{K}S^{-1}  T^{-1} T_\theta^Q x_\theta +\tau_v^{-1}T\mathcal{L} \mathcal{K}S^{-1} J_V^Q V \IEEEnonumber\\
&&-\tau_v^{-1}T\mathcal{L} \mathcal{K}T^{-1} T_\zeta x_\zeta  +\tau_v^{-1}T\mathcal{L} \mathcal{K}S^{-1} w_Q.\qquad\IEEEyessubnumber\label{eq:NewDual}
\end{IEEEeqnarray}
According to \eqref{eq:TMatrixProperties}, the first columns of $T_\theta^P$, $T_\theta^Q$, and $T_\zeta$ are all zeros, which means the first elements of $x_\theta$ and $x_\zeta$ -- $\theta_{\rm av}$ and $\zeta_{\rm av}$ -- do not influence the dynamics in \eqref{eq:NewCoordinates} at all. Therefore, \eqref{eq:NewCoordinates} is the interconnection of the two cascaded subsystems, given by
\begin{IEEEeqnarray}{rCl}
\label{eq:GroundedSystem}
\dot{r}_\theta &=&  R_\theta r_\theta +R_{\theta V} V + d_\theta,\IEEEyesnumber\IEEEyessubnumber\label{eq:GroundedAngle}\\
\tau_v\dot{v} &=&R_{v\theta}r_\theta + (R_{vV}-\beta I_n) V +R_{v\zeta} r_\zeta -\rho (v) v + d_v  ,\qquad\IEEEyessubnumber\label{eq:GroundedInt}\\
\varepsilon\dot{r}_\zeta&=&R_{\zeta \theta}r_\theta +R_{\zeta V} V + R_\zeta r_\zeta + d_\zeta,\IEEEyessubnumber\label{eq:GroundedDual}\\
\label{eq:AverageDynamics}
\dot{\theta}_{\rm av} &=& R_\theta^{\rm av} r_\theta +R_{\theta V}^{\rm av} V + d_\theta^{\rm av}, \IEEEyesnumber\IEEEyessubnumber\label{eq:AverageDynamicAngle}\\
\varepsilon\dot{\zeta}_{\rm av} &=& 0_n,\IEEEyessubnumber\label{eq:AverageDynamicZeta}
\end{IEEEeqnarray}
where their components are given in Appendix~\ref{Appendix:Components}. It should be noted that to obtain \eqref{eq:GroundedSystem}-\eqref{eq:AverageDynamics} from the dynamics \eqref{eq:NewCoordinates}, we have used the properties $r_\theta=I_r x_\theta$, $r_\zeta=I_r x_\zeta$, $\theta_{\rm av}=1_n^\top T^\top x_\theta$, $\zeta_{\rm av}=1_n^\top T^\top x_\zeta$, $x_\theta=I_r^\top r_\theta + T1_n\theta_{\rm av}$, and $x_\zeta=I_r^\top r_\zeta + T1_n\zeta_{\rm av}$, where $I_{r}=[0_{n-1}\, I_{n-1}]\in \mathbb{R}^{(n-1)\times n}$.

Clearly, the dynamics of $r_\theta$, $r_\zeta$, and $v$ do not depend on $\theta_{\rm av}$ and $\zeta_{\rm av}$. Therefore, the steady states of \eqref{eq:NewCoordinates} and hence \eqref{eq:SystemAfterAssumptions} are stable, \textit{if and only if} the steady state of \eqref{eq:GroundedSystem} is stable. In what follows, we discover this.

\subsubsection{\textbf{Timescale Separation and Singular Perturbation Analysis}}
We are now interested in studying the stability of the steady state of the system \eqref{eq:GroundedSystem} using the idea of timescale separation by considering \eqref{eq:GroundedAngle}-\eqref{eq:GroundedInt} as the slow dynamics and \eqref{eq:GroundedDual} as the fast dynamics. The following theorem states the stability conditions under these considerations. 
\begin{theorem}[Exponential Stability for \eqref{eq:GroundedSystem}]
\label{Theorem:Stability}
Suppose that the linear matrix inequality
\begin{IEEEeqnarray}{c}
\mathcal{P}_\theta \succ 0, \quad \mathcal{D}_v \succ 0,\quad \mathcal{Q}+\mathcal{Q}^\top \prec 0,\IEEEyesnumber\IEEEyessubnumber\label{eq:LMI}
\end{IEEEeqnarray}
in the variables $\mathcal{P}_\theta$ and $\mathcal{D}_v$ has a solution, where $\mathcal{P}_\theta$ is symmetric, $\mathcal{D}_v$ is diagonal, and $\mathcal{Q}$ is
\begin{IEEEeqnarray}{rCl}
\mathcal{Q} &=& \begin{bmatrix}
 \mathcal{P}_\theta R_\theta & \mathcal{P}_\theta R_{\theta V} \\
\mathcal{D}_v R_{v\theta}^{\rm new}  & \mathcal{D}_v (R_{vV}^{\rm new}-\beta I_n)
\end{bmatrix},\qquad\IEEEyessubnumber\label{eq:MatrixQ}\\
\text{where }&&\begin{cases}
R_{v\theta}^{\rm new}=R_{v\theta}-R_{v\zeta} R_\zeta^{-1}R_{\zeta \theta}\\
R_{vV}^{\rm new}=R_{vV}-R_{v\zeta} R_\zeta^{-1}R_{\zeta V}.
\end{cases}\IEEEyessubnumber\label{eq:MatrixQComponents}
\end{IEEEeqnarray}
Then, there exists $\varepsilon^\star>0$ such that for all $\tau_d<\varepsilon^\star\tau_v$ the steady state of the system \eqref{eq:GroundedSystem} is exponentially stable.
\end{theorem}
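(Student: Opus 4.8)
The plan is to apply the singular perturbation framework of \cite[Theorem 11.3]{Khalil}, treating \eqref{eq:GroundedAngle}--\eqref{eq:GroundedInt} as the slow $(r_\theta,v)$ dynamics and \eqref{eq:GroundedDual} as the fast $r_\zeta$ dynamics with perturbation parameter $\varepsilon=\tau_d/\tau_v$. The argument splits into three parts: exponential stability of the boundary layer, exponential stability of the reduced slow model, and the composition step. First I would analyze the boundary layer: freezing the slow states and rescaling time by $\varepsilon$, the fast subsystem \eqref{eq:GroundedDual} becomes $\tfrac{d r_\zeta}{d\sigma}=R_\zeta r_\zeta+\text{const}$. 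Using the expression for $R_\zeta$ from Appendix~\ref{Appendix:Components} together with connectivity of the communication graph (so that the reduced Laplacian $\mathcal{L}_{\rm red}$ and $\mathcal{K}=(I_n+k\mathcal{L})^{-1}$ are positive definite on the relative coordinates), I would show $R_\zeta$ is Hurwitz. This both guarantees uniform exponential stability of the boundary layer and ensures $R_\zeta$ is invertible, so the quasi-steady-state manifold $r_\zeta=h(r_\theta,v)=-R_\zeta^{-1}(R_{\zeta\theta}r_\theta+R_{\zeta V}V+d_\zeta)$ is well defined.

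Substituting $h(\cdot)$ into \eqref{eq:GroundedInt} yields the reduced slow model, whose $v$-channel is governed precisely by $R_{v\theta}^{\rm new}$ and $R_{vV}^{\rm new}$ in \eqref{eq:MatrixQComponents}. For this model I would build a composite Lyapunov function. Setting $G_i(v_i)=\Delta_i^2\log\cosh(v_i/\Delta_i)$, so that $G_i'(v_i)=\Delta_i\tanh(v_i/\Delta_i)=V_i-V_i^\star$ via \eqref{eq:VoltageCompact}, I define the Bregman distance $B_i(v_i)=G_i(v_i)-G_i(\bar v_i)-G_i'(\bar v_i)(v_i-\bar v_i)\ge 0$ and take
\[
W=\tfrac{1}{2}(r_\theta-\bar r_\theta)^\top\mathcal{P}_\theta(r_\theta-\bar r_\theta)+\sum_{i}\mathcal{D}_{v,ii}\,B_i(v_i),
\]
which is positive definite, $C^1$, and satisfies $\partial W/\partial v_i=\mathcal{D}_{v,ii}(V_i-\bar V_i)$ with $V-\bar V=\Delta(\boldsymbol{\tanh}(\Delta^{-1}v)-\boldsymbol{\tanh}(\Delta^{-1}\bar v))$. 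This smoothness of $\tanh$ is exactly the property flagged in the footnote after \eqref{eq:ContVset}. Differentiating $W$ along the reduced dynamics and subtracting the equilibrium identities, I would collect the linear part into the quadratic form $\tfrac{1}{2}\xi^\top(\mathcal{Q}+\mathcal{Q}^\top)\xi$ with $\xi=\mathrm{col}(r_\theta-\bar r_\theta,\,V-\bar V)$ and $\mathcal{Q}$ as in \eqref{eq:MatrixQ}; the LMI \eqref{eq:LMI} renders this negative definite in $\xi$.

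The leakage term contributes $-\tfrac{1}{\tau_v}(V-\bar V)^\top\mathcal{D}_v\big(\rho(v)v-\rho(\bar v)\bar v\big)$, which I would show is nonpositive: from \eqref{eq:rho} each scalar map $v_i\mapsto\rho_i(v_i)v_i$ is monotonically nondecreasing, and $v_i\mapsto V_i$ is monotonically increasing, so the two factors share sign componentwise. Hence $\dot W\le-c\|\xi\|^2$, and since $\tanh$ is bi-Lipschitz on the bounded operating region where voltages lie strictly inside their limits (Proposition~\ref{Proposition:SSProperties}), $\|V-\bar V\|$ is comparable to $\|v-\bar v\|$ there, yielding local exponential stability of the reduced model. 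Finally I would invoke \cite[Theorem 11.3]{Khalil}: with the reduced model exponentially stable, the boundary layer exponentially stable, and the right-hand sides smooth near the equilibrium, there exists $\varepsilon^\star>0$ such that the full interconnection \eqref{eq:GroundedSystem} is exponentially stable for all $\varepsilon=\tau_d/\tau_v<\varepsilon^\star$, i.e., for all $\tau_d<\varepsilon^\star\tau_v$.

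The main obstacle is the nonlinear $v$-channel. The analysis must be carried out incrementally about the equilibrium; the Lyapunov gradient must be expressed in the output $V-\bar V$ rather than $v-\bar v$ so that the cross terms align with the block structure of $\mathcal{Q}$; and the sign of the leakage contribution must be secured through the monotonicity of $\rho_i(v_i)v_i$. This is the step where the specific form \eqref{eq:rho} and the choice of $\tanh$ genuinely earn their keep, and verifying that $R_\zeta$ is Hurwitz from the (somewhat involved) definitions in Appendix~\ref{Appendix:Components} is the other place where care is required.
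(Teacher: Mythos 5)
Your proposal is correct and takes essentially the same route as the paper's own proof: the same slow--fast decomposition with $\varepsilon=\tau_d/\tau_v$, the same quasi-steady-state elimination of $r_\zeta$ yielding $R_{v\theta}^{\rm new}$ and $R_{vV}^{\rm new}$, negative definiteness of $R_\zeta$ for the boundary layer, a composite Lyapunov argument closed by the LMI \eqref{eq:LMI} and the monotonicity of the leakage and $\tanh$ maps, and a final appeal to \cite[Theorem 11.3]{Khalil}. Your Bregman construction with $G_i(v_i)=\Delta_i^2\log\cosh(v_i/\Delta_i)$ is precisely the paper's integral term $\tau_v\int_{0_n}^{\tilde{v}}(\tilde{h}(\tau))^\top\mathcal{D}_v\,d\tau$ in \eqref{eq:LyapunovReduced} (up to the $\tau_v$ normalization), and your explicit bi-Lipschitz remark relating $\|V-\bar{V}\|$ to $\|v-\bar{v}\|$ is, if anything, slightly more careful than the paper's corresponding step.
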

\begin{proof}
We consider $\varepsilon=\tau_d/\tau_v$ small and \eqref{eq:GroundedDual} as the fast dynamics; therefore, the velocity $\dot{r}_\zeta \propto (1/\varepsilon)$ can be large when $\varepsilon$ is small and ${r}_\zeta$ in \eqref{eq:GroundedDual} may rapidly converge to a root of $R_{\zeta \theta}r_\theta +R_{\zeta V} V + R_\zeta r_\zeta + d_\zeta=0_n$. In other words, the subsystem \eqref{eq:GroundedDual} may quickly achieve a \textit{quasi-steady state}, where $r_\zeta \approx -R_\zeta^{-1}(R_{\zeta \theta}r_\theta +R_{\zeta V} V + d_\zeta)$. We now define the error between the actual $r_\zeta$ and this quasi-steady state as $y=r_\zeta+R_\zeta^{-1}(R_{\zeta \theta}r_\theta +R_{\zeta V} V + d_\zeta)$. We can therefore write \eqref{eq:GroundedSystem} as the \textit{singular perturbation problem} below\cite[Ch. 11]{Khalil}.
\begin{IEEEeqnarray}{rCl}
\label{eq:SPP}
\dot{r}_\theta &=&  R_\theta r_\theta +R_{\theta V} V + d_\theta,\IEEEyesnumber\IEEEyessubnumber\label{eq:SPPAngle}\\
\tau_v\dot{v} &=&R_{v\theta}^{\rm new}r_\theta + (R_{vV}^{\rm new}-\beta I_n) V  +R_{v\zeta} y\IEEEnonumber\\
&&-\rho (v) v + d_v^{\rm new}  ,\IEEEyessubnumber\label{eq:SPPInt}\\
\varepsilon\dot{y}&=&R_\zeta y +\varepsilon R_\zeta^{-1}(R_{\zeta \theta}\dot{r}_\theta +R_{\zeta V} (\partial V/\partial v)\dot{v}),\IEEEyessubnumber\label{eq:SPPDual}\\
V &=& V_\star + \Delta\boldsymbol{\tanh}( \Delta^{-1}v ),\IEEEyessubnumber\label{eq:GroundedVoltage}
\end{IEEEeqnarray}
where $R_{v\theta}^{\rm new}$ and $R_{vV}^{\rm new}$ are given in \eqref{eq:MatrixQComponents} and $d_v^{\rm new}=d_v-R_{v\zeta} R_\zeta^{-1} d_\zeta$. We want to examine the stability of the steady state of \eqref{eq:SPP} by examining the reduced system \eqref{eq:ReducedAngle}-\eqref{eq:ReducedInt} and boundary-layer system \eqref{eq:BoundaryLayer}, given as
\begin{IEEEeqnarray}{rCl}
\label{eq:ReducedBLDynamics}
\dot{r}_\theta &=&  R_\theta r_\theta +R_{\theta V} V + d_\theta,\IEEEyesnumber\IEEEyessubnumber\label{eq:ReducedAngle}\\
\tau_v\dot{v} &=&R_{v\theta}^{\rm new}r_\theta + (R_{vV}^{\rm new}-\beta I_n) V  -\rho (v) v + d_v^{\rm new}  ,\quad\IEEEyessubnumber\label{eq:ReducedInt}\\
\partial y/\partial \mathbf{t}&=&R_\zeta y,\IEEEyessubnumber\label{eq:BoundaryLayer}
\end{IEEEeqnarray}
where $\mathbf{t}=t/\varepsilon$ is a stretched timescale with $t$ the time. From Appendix~\ref{Appendix:Components}, we have $R_\zeta=-  I_r \tau_v^{-1}T\mathcal{L} \mathcal{K} \mathcal{L} T^{-1} I_r^\top$. By the connectivity of the communication graph and the definitions of $\mathcal{K}$, $I_r$, and $T$, one can establish that the matrix $R_\zeta$ is negative-definite; hence, there exists a symmetric matrix $\mathcal{P}_y\succ 0$ such that $\mathcal{P}_y R_\zeta+R_\zeta^\top \mathcal{P}_y \prec 0$. With the matrix $\mathcal{P}_y$ and the matrices $\mathcal{P}_\theta \succ 0$ and $\mathcal{D}_v \succ 0$ in \eqref{eq:LMI}, we now take the following Lyapunov candidates for the slow \eqref{eq:ReducedAngle}-\eqref{eq:ReducedInt} and fast \eqref{eq:BoundaryLayer} dynamics.
\begin{IEEEeqnarray}{rCl}
\mathcal{S}_s(r_\theta,v)
&=& \tfrac{1}{2} \tilde{r}_\theta^\top\mathcal{P}_\theta\tilde{r}_\theta
+ \tau_v\int_{0_n}^{\tilde{v}}  ( \tilde{h} (\tau) )^\top \mathcal{D}_v d\tau,\IEEEyesnumber\IEEEyessubnumber\label{eq:LyapunovReduced}\\
\mathcal{S}_f(y)&=& \tfrac{1}{2} \tilde{y}^\top  \mathcal{P}_y \tilde{y},\IEEEyessubnumber\label{eq:LyapunovBL}
\end{IEEEeqnarray}
where $\tilde{r}_\theta=r_\theta-\bar{r}_\theta$, $\tilde{v}=v-\bar{v}$, and $\tilde{y}=y-\bar{y}$ with $\bar{r}_\theta$, $\bar{v}$, and $\bar{y}$ the steady states in \eqref{eq:ReducedBLDynamics}, and $\tilde{h} (\tau)$ the following function 
\begin{IEEEeqnarray}{rCl}
\tilde{h} (\tau)&=&\Delta\boldsymbol{\tanh}\big( \Delta^{-1}(\bar{v}+\tau) \big)-\Delta\boldsymbol{\tanh}\big( \Delta^{-1}(\bar{v}) \big),\IEEEyesnumber\quad\label{eq:ShiftedHyperbolic}
\end{IEEEeqnarray}
which is \textit{element-wise strictly increasing} in $\tau$. Time derivatives of $\mathcal{S}_s$ and $\mathcal{S}_f$ are
\begin{IEEEeqnarray}{rCl}
\label{eq:DrivativeLyapunov}
\dot{\mathcal{S}}_s&=&
\tilde{r}_\theta^\top\mathcal{P}_\theta\dot{\tilde{r}}_\theta
+( \tilde{h} (\tilde{v}) )^\top \mathcal{D}_v  \tau_v\dot{\tilde{v}},\IEEEyesnumber\IEEEyessubnumber\label{eq:DerivativeLyapunovReduced}\\
\partial\mathcal{S}_f/\partial \mathbf{t}&=& \tilde{y}^\top \mathcal{P}_y    \partial \tilde{y}/\partial \mathbf{t}.\IEEEyessubnumber\label{eq:DerivativeLyapunovBL}
\end{IEEEeqnarray}
Inserting the dynamics \eqref{eq:ReducedBLDynamics} into \eqref{eq:DrivativeLyapunov}, we have
\begin{IEEEeqnarray}{rCl}
\label{eq:SupplyRates}
\dot{\mathcal{S}}_s&=& -( \tilde{h} (\tilde{v}) )^\top \mathcal{D}_v \tilde{\delta}(\tilde{v})
+\tilde{\eta}^\top \mathcal{Q}\tilde{\mathcal{\eta}},\quad\IEEEyesnumber\IEEEyessubnumber\label{eq:SupplyRateReduced}\\
\partial\mathcal{S}_f/\partial \mathbf{t}&=&   \tilde{y}^\top \mathcal{P}_y R_\zeta \tilde{y},\IEEEyessubnumber\label{eq:SupplyRateBL}
\end{IEEEeqnarray}
where $\tilde{\delta}(\tilde{v})=\rho(v)(v)-\rho(\bar{v})\bar{v}$, $\tilde{\eta}={\rm col}(\tilde{r}_\theta,\tilde{h} (\tilde{v}))$ and $\mathcal{Q}$ is as given in \eqref{eq:MatrixQ}. The functions $\tilde{\delta}(\tilde{v})$ and $\tilde{h}(\tilde{v})$ are both element-wise increasing with respect to $\tilde{v}$; therefore, we have
\begin{IEEEeqnarray}{rCl}
\label{eq:Inequalities}
-( \tilde{h} (\tilde{v}) )^\top  \mathcal{D}_v \tilde{\delta}(\tilde{v}) &\leq& 0. \IEEEyesnumber\IEEEyessubnumber\label{eq:Monotonicity}
\end{IEEEeqnarray}
If the matrix inequality \eqref{eq:LMI} holds, we can write
\begin{IEEEeqnarray}{CrCl}
\tilde{\eta}^\top \mathcal{Q} \tilde{\mathcal{\eta}} \leq -\alpha_s \tilde{\eta}^\top\tilde{\eta},\IEEEyessubnumber\label{eq:LMIResult}
\end{IEEEeqnarray}
where $\alpha_s>0$ is the smallest eigenvalue of $-(\mathcal{Q}+\mathcal{Q}^\top)$. We can also write
\begin{IEEEeqnarray}{CrCl}
\tilde{y}^\top \mathcal{P}_y R_\zeta \tilde{y} = \tilde{y}^\top (\mathcal{P}_y R_\zeta+R_\zeta^\top \mathcal{P}_y) \tilde{y} < -\alpha_f \tilde{y}^\top \tilde{y} ,\IEEEyessubnumber\label{eq:ConnectivityResult}
\end{IEEEeqnarray}
where $\alpha_f>0$ is the smallest eigenvalue of $-(\mathcal{P}_y R_\zeta+R_\zeta^\top \mathcal{P}_y)$. Using \eqref{eq:Monotonicity}-\eqref{eq:ConnectivityResult}, we can now bound the derivatives in \eqref{eq:SupplyRates} as
\begin{IEEEeqnarray}{c}
\dot{\mathcal{S}}_s \leq -\alpha_s \tilde{\eta}^\top\tilde{\eta},\qquad \partial\mathcal{S}_f/\partial \mathbf{t} \leq -\alpha_f \tilde{y}^\top \tilde{y},\IEEEyesnumber\label{eq:SupplyRatesBound}
\end{IEEEeqnarray}
which, together with the fact that $\mathcal{S}_s$ and $\mathcal{S}_f$ are both positive-definite and radially unbounded, show exponential stability of the steady states of the reduced and boundary-layer dynamics. Now, we can say that the singularly perturbed system \eqref{eq:SPP} satisfies all the assumptions of \cite[Theorem 11.3]{Khalil}; therefore, there exists $\varepsilon^\star>0$ such that for all $\varepsilon<\varepsilon^\star$ or equivalently $\tau_d<\varepsilon^\star\tau_v$, the steady state of \eqref{eq:SPP} is exponentially stable.\vspace{-.1in}
\end{proof}
\subsection{Intuition on Parameter Selection}

The tunable control parameters in \eqref{eq:AngleDynamics}, \eqref{eq:VoltageDynamics}, \eqref{eq:Regulator}, and \eqref{eq:PrimalDualDynamics} are $m_i^\omega$, $m_i^V$, $\tau_\Omega$, $\tau_p$, $\tau_v$, $\tau_d$, $k$, and $\beta$. Following the standard droop control design\cite{Wu2017VI}, we select the droop coefficients $m_i^V=\Delta_i$ and $m_i^\omega=m_\star=2\pi \Delta f_{\rm max}$ for all $i$, where $\Delta f_{\rm max}$ is the maximum steady-state frequency deviation. In what follows, we introduce the impacts and limitations of the remaining parameters and propose a selection procedure.
\begin{enumerate}[leftmargin=*]
    \item $(\tau_p, \tau_\Omega)$: According to Assumption~\ref{Assumption:FastLowPassFilter}, our stability analysis is based on $\tau_\Omega,\tau_p<\!\!<\tau_v$ and $\tau_p<\!\!<\tau_d$; therefore, the smaller $\tau_\Omega$ and $\tau_p$, the more reliable our stability analysis. We suggest starting the selection procedure by selecting a small time constant for the low-pass filter \eqref{eq:PrimalDynamics}, e.g., $\tau_p=0.01s$. One can, however, select a larger $\tau_p$ for better filtering, if required. On the other hand, according to \eqref{eq:Freq}, decreasing the frequency time constant $\tau_\Omega$ increases $|\dot{f}_i|=\tfrac{1}{2\pi}|\dot{\omega}_i|=\tfrac{1}{2\pi}|\dot{\Omega}_i|$, known as the \textit{Rate of Change of Frequency} (\texttt{RoCoF}), which should be limited in practice (see, for example, \cite[Table 21]{IEEEStd1547}). Thus, we suggest selecting $\tau_\Omega = m_i^\omega/(2\pi \texttt{RoCoF}^\star)$, where $\texttt{RoCoF}^\star$ is the maximum withstandable initial \texttt{RoCoF} after a step change in $P_i$ from $0$ to $S_i^{\rm rated}$ or vice versa.
    
    \item $(\tau_v,\tau_d)$: Following the discussion in the previous step, to make our stability analysis more reliable we suggest selecting $\tau_d>\!\!>\tau_p$ and $\tau_v>\!\!>\tau_\Omega,\tau_p$, for example, $\tau_d \geq 10\tau_p$ and $\tau_v \geq \max\{10\tau_\Omega,10\tau_p\}$. On the other hand, from Theorem~\ref{Theorem:Stability}, one should select $\tau_d<\varepsilon^\star \tau_v$. Since the exact value of $\varepsilon^\star$ is not easily available, we select $\tau_v$ as large as possible and $\tau_d$ as small as possible, for example, we suggest selecting $\tau_v\geq 10 \tau_d$. Combining these suggestions, we get $\tau_d \geq 10\tau_p$ and $\tau_v \geq \max\{10\tau_\Omega,10\tau_d\}$. Here, it should be noted that, in practice, a small $\tau_d$ requires fast (low-latency) inter-IBR data transmissions. But, selecting a large $\tau_d$ leads to a large $\tau_v$, which in turn causes slower regulation of the IBR voltage and its reactive power. Therefore, while selecting the control parameters, we should consider the practical standards, e.g., IEEE 1547 \cite{IEEEStd1547}, on this matter. For example, according to \cite[Ch. 5.3]{IEEEStd1547} the response time for voltage-reactive power control, depending on the mode and application, varies between $1$ to $10$ seconds. Selecting $\tau_\Omega,\tau_d \in [0.1s,1s]$ and following the above suggestions, we get $\tau_v\in[1s,10s]$ which lies in this acceptable range.

    \item $(k,\beta)$: Clearly, $\beta$ helps solvability of the linear matrix inequality \eqref{eq:LMI}; it increases the eigenvalues of $-(\mathcal{Q}+\mathcal{Q}^\top)$ and hence the convergence rate $\alpha_s$ in \eqref{eq:SupplyRatesBound}. But, according to Proposition~\ref{Proposition:SSProperties}, it degrades the steady-state reactive power sharing. Therefore, we suggest selecting a desired $k>0$ first\footnote{We suggest selecting a desired $k_d>0$ and computing $k = k_d/\sigma_2$, where $\sigma_2$ is the second smallest eigenvalue of $\mathcal{L}$, known as algebraic connectivity of the communication graph\cite[Ch. 6]{FB-LNS}.} and then selecting a small $\beta$ such that: \textbf{\textit{i)}} the linear matrix inequality \eqref{eq:LMI} has a solution, and \textbf{\textit{ii)}} for every IBR $i$, the value $\beta \Delta_i /V_i^\star$ is an acceptable upper bound of the error $|\bar{Q}_i/S_i^{\rm rated}-\alpha_Q|$.
\end{enumerate}\vspace{-.1in}
\section{Case Studies and Simulation Results}
\label{Sec:CaseStudy}
\begin{figure}
\centering
\includegraphics[width=.85\columnwidth]{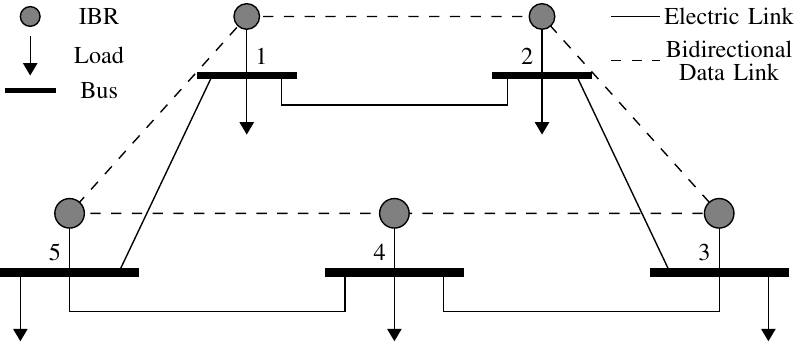}
\caption{Test low-voltage microgrid system with specifications given in Table~\ref{Table:LVMG}.\label{Fig:LVMG}\vspace{-.1in}}
\end{figure}
\begin{table}
\centering
\caption{Control and Electric Specifications for the LV system in Fig.~\ref{Fig:LVMG}}
\label{Table:LVMG}
\begin{tabular}{|c|c|c|c|}\hline
$V_{\rm nom}$ & $(V_i^{\rm min},V_i^{\rm max})$ & $S_{\rm Base}$& $(\tau_\Omega,m_i^\omega,m_i^V)$\\ \hline
220 [V] & (0.95, 1.05) [p.u.] & 100 [kVA]& (0.1, 1.57, 11)\\ \hline
$f_{\rm nom}$ & $\Delta f_{\rm max}$ & $(\tau_v,\tau_d,\tau_p)$& $(\beta,k_d,k)$\\ \hline
50 [Hz] & 0.005 [p.u.] & (1, 0.1, 0.01)& (0.01, 10, 7.24)\\\hline
\end{tabular}
\\\vspace*{1mm}
\begin{tabular}{|c|c|c|c|c|c|}
\hline
\multicolumn{6}{|c|}{IBR Capacity + Load Apparent Power and Power Factor} \\ \hline
IBR/Bus \#& 1 & 2 & 3 & 4 & 5 \\ \hline
$S_i^{\rm rated}$ [p.u.] & 1.1 & 0.6 & 0.8 & 0.75 & 1.3 \\ \hline
$S_i^{\rm load}$ [p.u.]  & 0.9 & 0.5 & 0.7 & 0.65 & 1 \\ \hline
${\rm PF}_i$             & 0.85 & 0.9 & 0.88 & 0.92 & 0.87 \\ \hline
\end{tabular}
\\\vspace*{1mm}
\begin{tabular}{|c|c|c|c|c|c|}
\hline
  \multicolumn{3}{|c|}{Bus $i$ to Bus $j$ Interconnection} & \multicolumn{3}{c|}{IBR Output Connection} \\ \hline
  $(i,j)$& $r_{ij}\:[\Omega]$ & $x_{ij}\:[\Omega]$ & IBR \#&  $r_{i}\:[\Omega]$ & $x_{i}\:[\Omega]$ \\ \hline
(1, 2)& 0.2 & 0.3 & 1 & 0.03 & 0.09    \\ \hline
(2, 3)& 0.19 & 0.19 & 2 & 0.1 & 0.25    \\ \hline
(3, 4)& 0.17 & 0.25 & 3 & 0.05 & 0.15   \\ \hline
(4, 5)& 0.15 & 0.22 & 4 & 0.08 & 0.23    \\ \hline
(5, 1)& 0.22 & 0.32 & 5 & 0.07 & 0.2    \\ \hline
\multicolumn{6}{l}{$r$ is resistance and $x$ is reactance.}
\end{tabular}\vspace{-.1in}
\end{table}
\begin{figure*}
\centering
\includegraphics[width=0.495\textwidth]{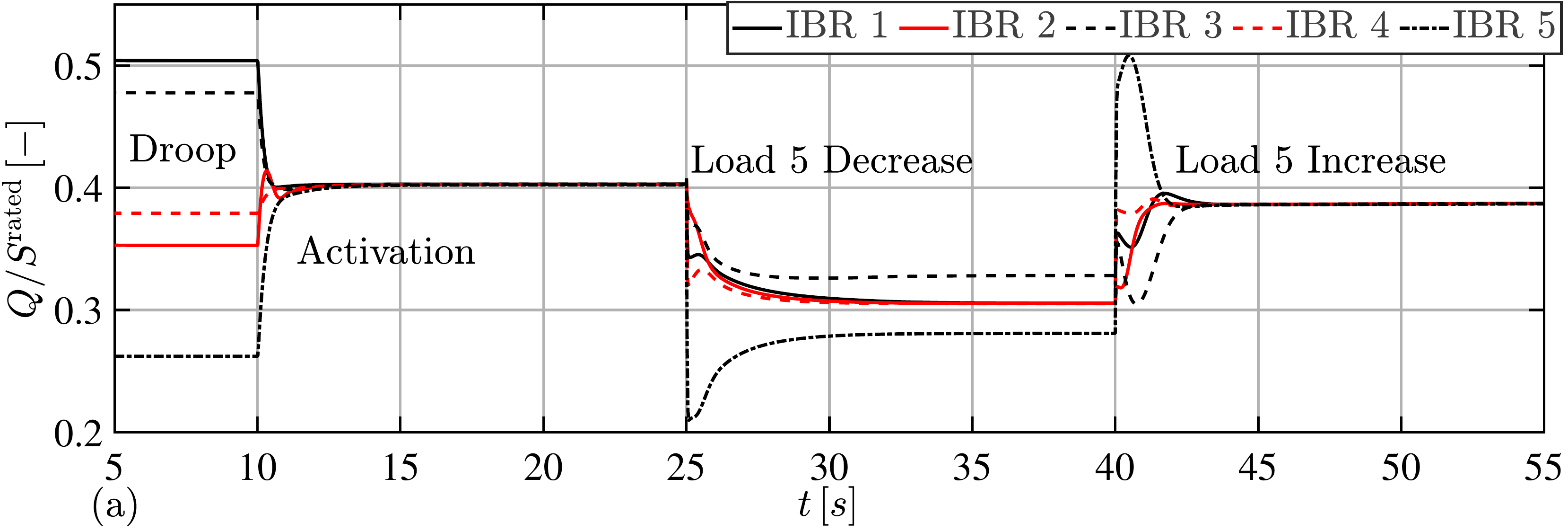}
\includegraphics[width=0.495\textwidth]{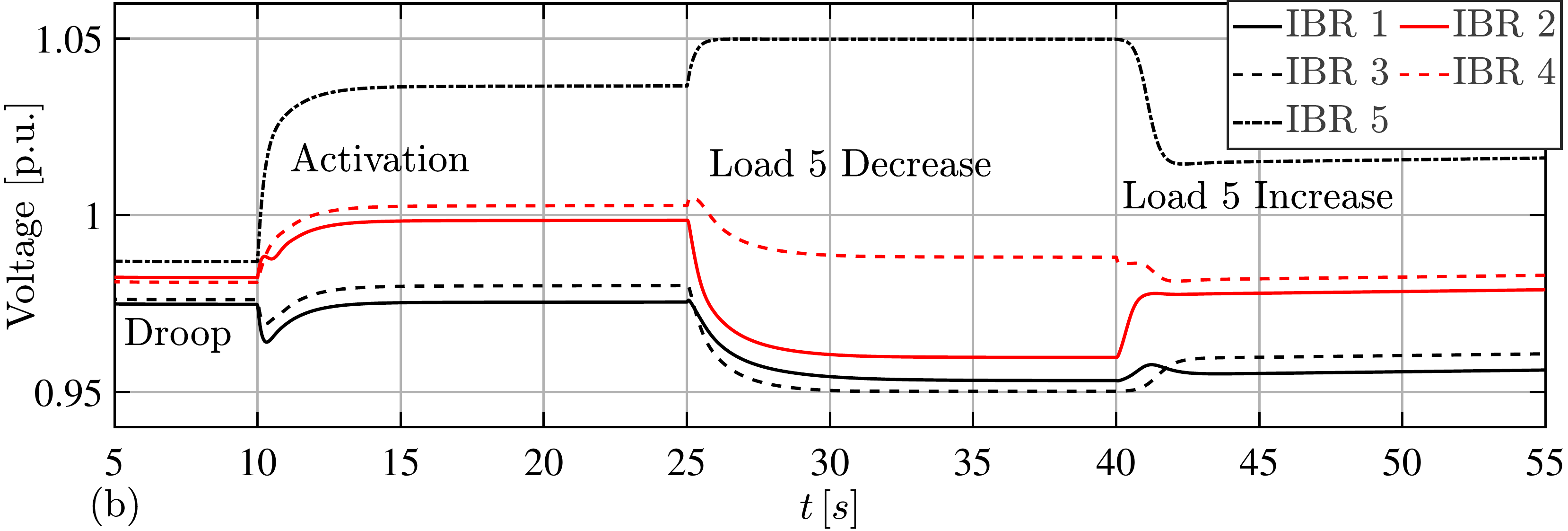}
\includegraphics[width=0.495\textwidth]{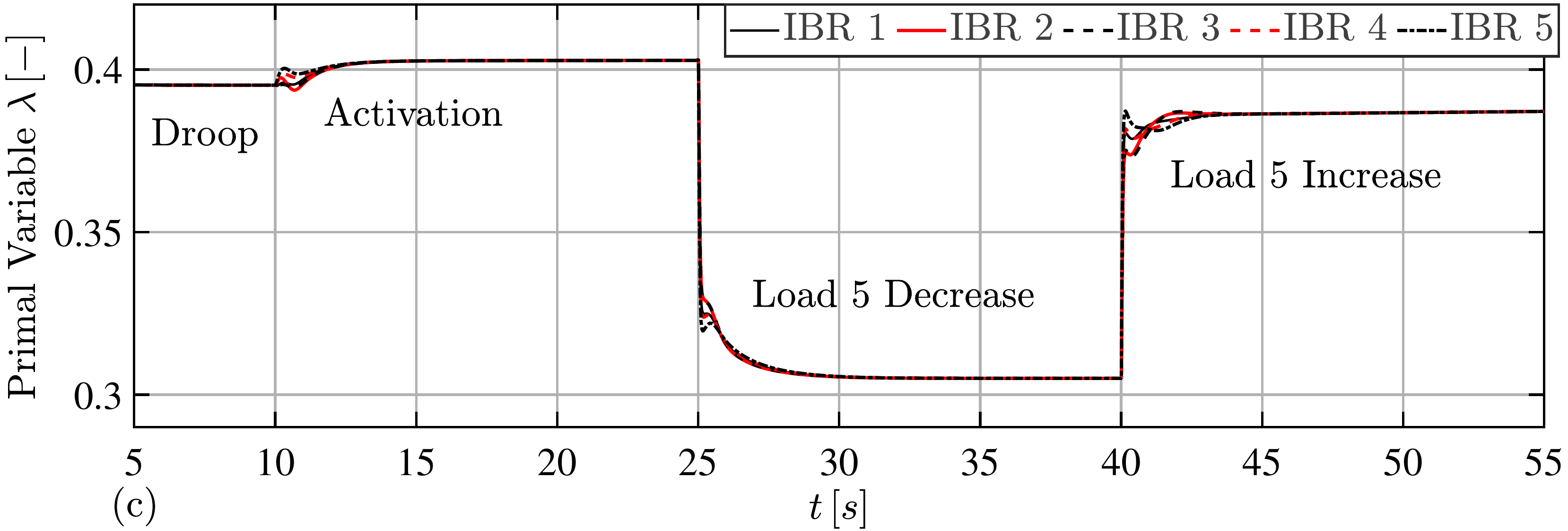}
\includegraphics[width=0.495\textwidth]{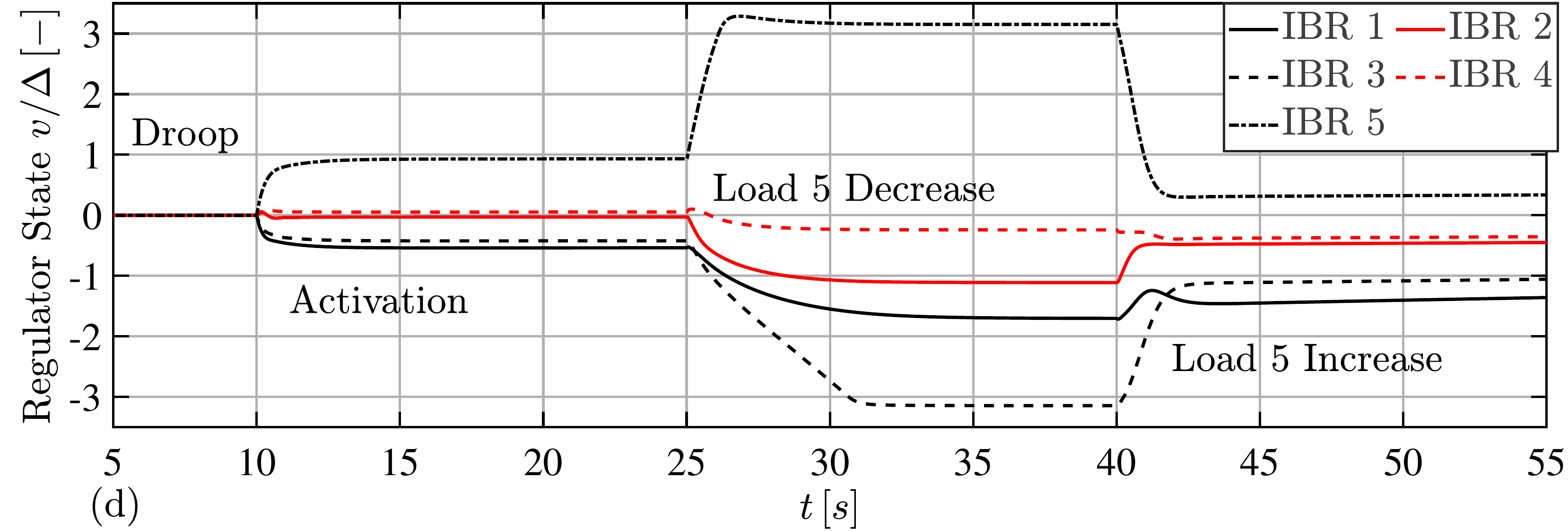}
\includegraphics[width=0.495\textwidth]{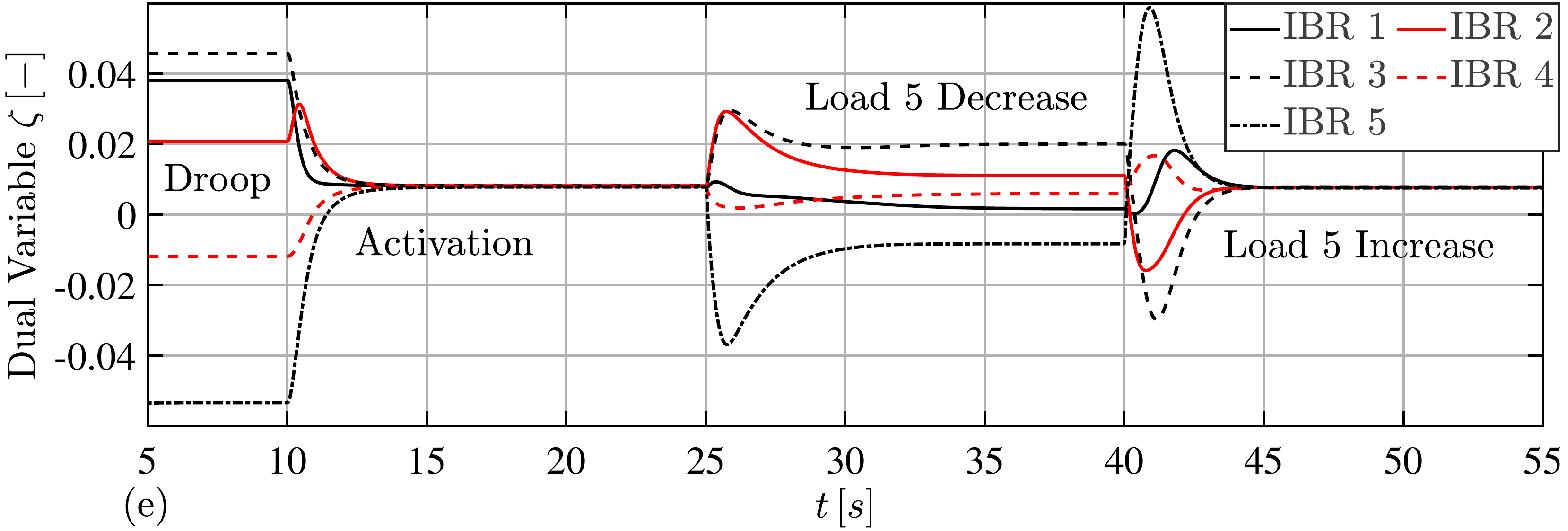}
\includegraphics[width=0.495\textwidth]{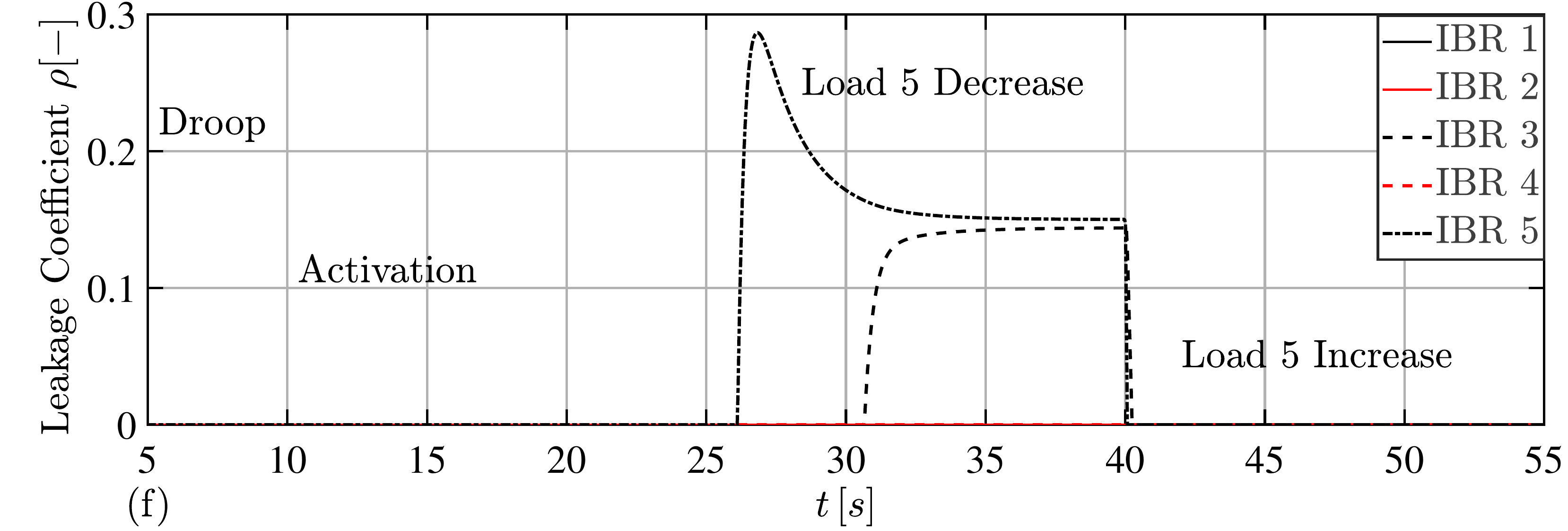}
\includegraphics[width=0.495\textwidth]{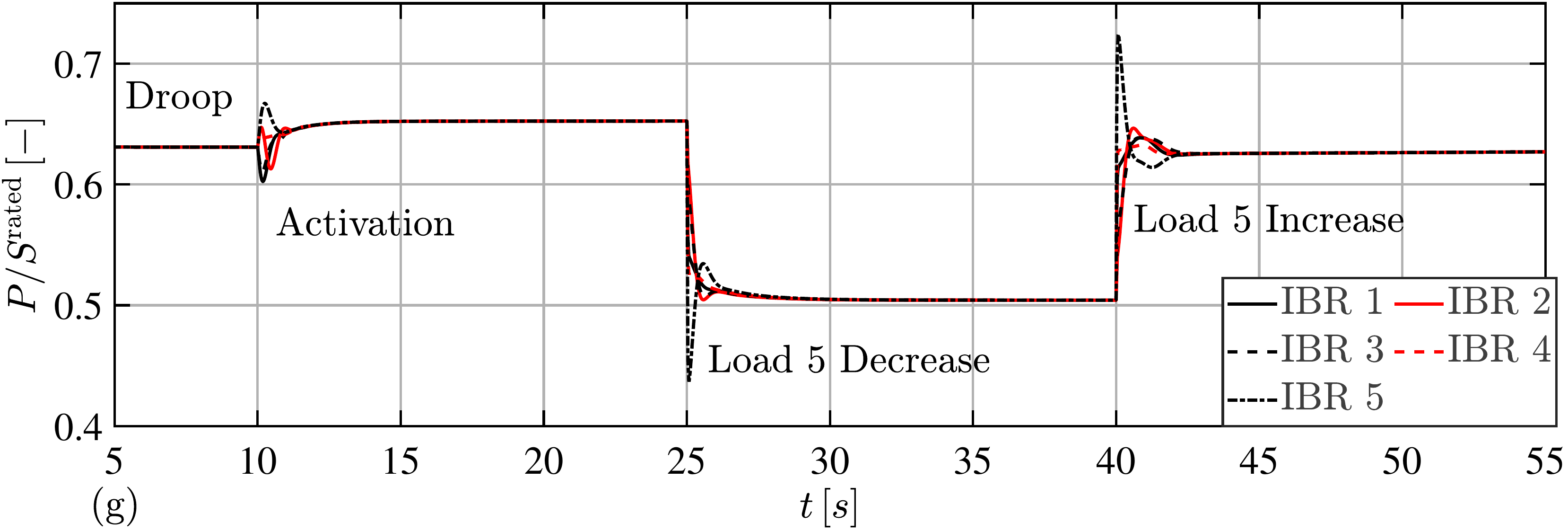}
\includegraphics[width=0.495\textwidth]{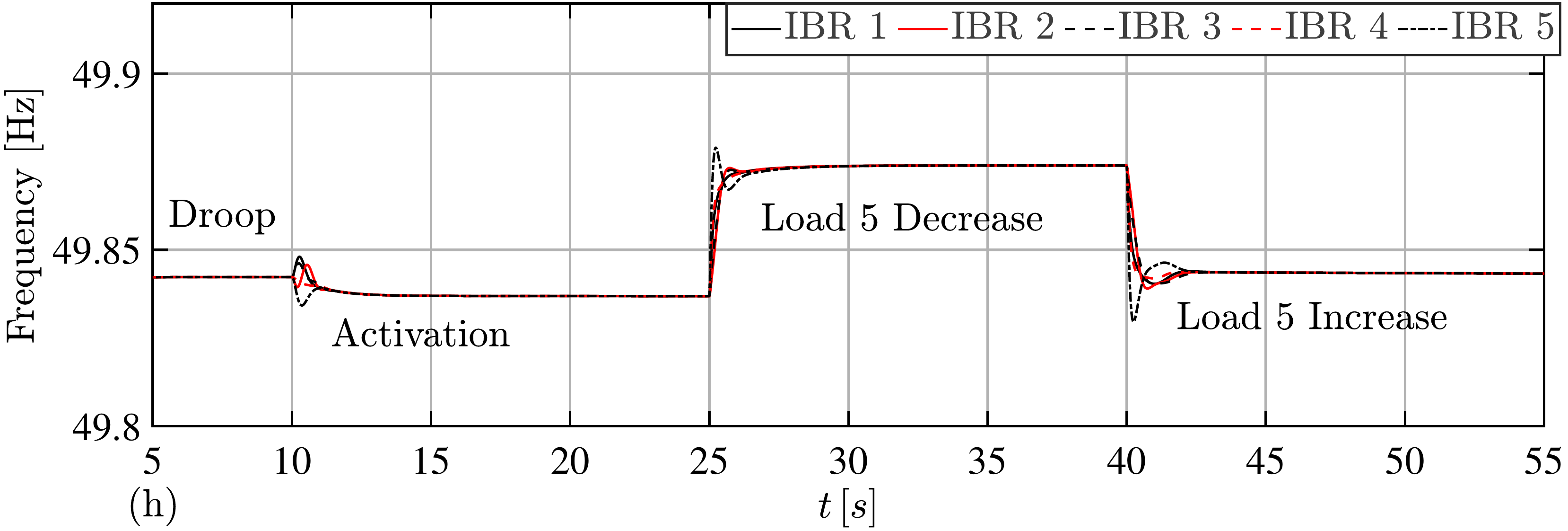}
\caption{Simulation results for Case Study 1; (a) reactive power ratios $Q_i/S_i^{\rm rated}$, (b) voltages $V_i$, (c) primal variables $\lambda_i$, (d) normalized integrator states $v_i/\Delta_i$, (e) dual variables $\zeta_i$, (f) leakage coefficients $\rho_i(v_i)$, (g) active power ratios $P_i/S_i^{\rm rated}$, and frequencies \smash{$f_i=\tfrac{1}{2}\omega_i/\pi$}.\label{Fig:CaseStudy1}\vspace{-.15in}}
\end{figure*}
To verify the effectiveness of the proposed controller, we applied it to a low-voltage 5-bus meshed microgrid system, simulated in MATLAB/Simscape Electrical software environment. The nominal voltage and frequency of the grid are 220-V (RMS) and 50-Hz, respectively. As shown in Fig.~\ref{Fig:LVMG}, the microgrid consists of five local loads energized by five IBRs. Each IBR feeds its corresponding main bus/load via an output connector. Table~\ref{Table:LVMG} shows the electrical and control specifications of the system. It is to note that, in our simulations, we adapted the detailed model of the inverters and internal control loops from\cite{Wu2017VI}.\vspace{-.1in}

\subsection{Case Study 1: Activation and Load Change}
\label{Subsec:CaseStudy1}

We assume that the droop controllers in \eqref{eq:FDroop} and \eqref{eq:VDroop} control the system before activating the proposed controller. According to Fig.~\ref{Fig:CaseStudy1}(a)-(b), we can see that the voltages deviate from the nominal value, and the IBRs do not share the reactive power proportionally. After activation of the controller at $t=10s$, the IBRs start changing their voltages so that their reactive power ratios become equal and, at the same time, their voltages maintain within limits $(0.95,1.05)$ [p.u.]. These results are in line with Properties 2 and 3 in Proposition~\ref{Proposition:SSProperties}.

At $t=25s$, the load at bus number 5 decreases by 80\%, which means to keep the proportional sharing, the 5th IBR must feed the other loads instead of the lost 80\% local load. Therefore, in a collaborative effort to reach an agreement on a new equal power ratio, this IBR increases its voltage, and the other ones reduce their voltages until the IBRs 1, 2, and 4 reach an agreement on $Q_i/S_i^{\rm rated}$. However, the 3rd and 5th IBRs fail to join this agreement because their voltages are already saturated at the minimum and maximum limits, respectively. However, they have come close to the point of agreement and stayed there. We can observe their effort in reaching a consensus with the other IBRs in Fig.~\ref{Fig:CaseStudy1}(d) and Fig. ~\ref{Fig:CaseStudy1}(f). After $t=25s$, the 5th IBR keeps integrating and increasing $v_5/\Delta_5$ to increase its voltage to the maximum. However, as the voltage is saturated using the $\tanh$ function, the voltage does not change much. Therefore, at $t\approx 26s$, when $v_5>3\Delta_5$, the leakage coefficient $\rho_5$ takes a positive value to prevent the integrator wind-up. Meanwhile, the 3rd IBR also keeps integrating but decreasing $h_3/\Delta_3$, until its voltage gets saturated at $t\approx 34s$ and $\rho_3$ also takes a positive value. These results are in line with properties 2 and 4 in Proposition~\ref{Proposition:SSProperties}. After restoring the lost load at $t=40s$, the IBRs re-achieve proportional reactive power sharing, and their leakage coefficients are all restored to zero.
Fig.~\ref{Fig:CaseStudy1}(c) and Fig.~\ref{Fig:CaseStudy1}(e) show the primal-dual variables; we can see that thanks to the dual variables, the primal variable always converges to the average of the reactive power ratios (see \eqref{eq:SSLambda}), no matter if the voltages are saturated or not. Therefore, they are treated as a globally-common variable like frequency (see Fig.~6(h)) and used as a reliable reference for the reactive power ratios of the IBRs. We can also see the active and frequency responses in Fig~\ref{Fig:CaseStudy1}(g)-(h), reflecting the impacts of the voltage-reactive power controller on the frequency-active power dynamics.\vspace{-.1in} 
\begin{figure}
\centering
\includegraphics[width=.494\columnwidth]{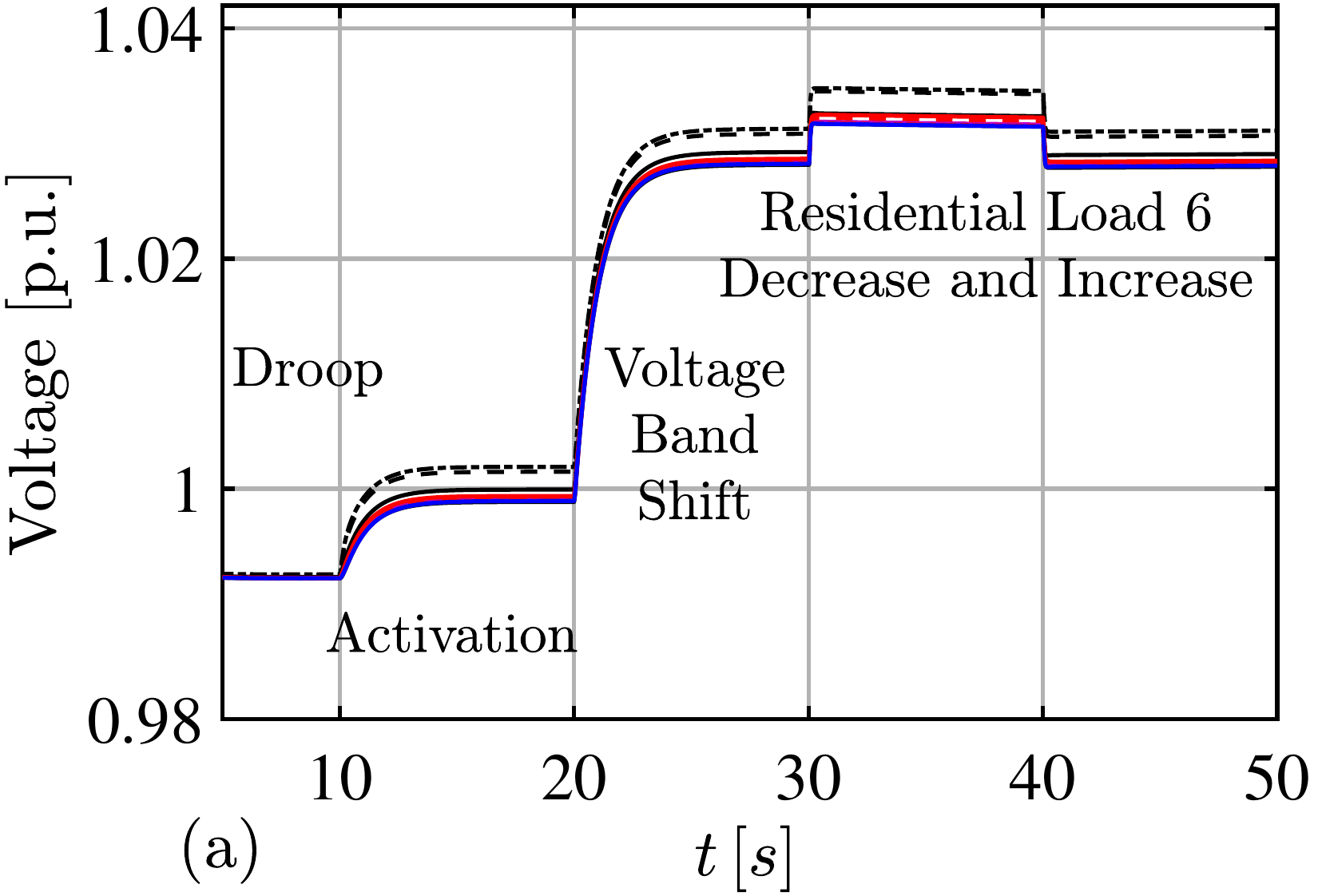}
\includegraphics[width=.494\columnwidth]{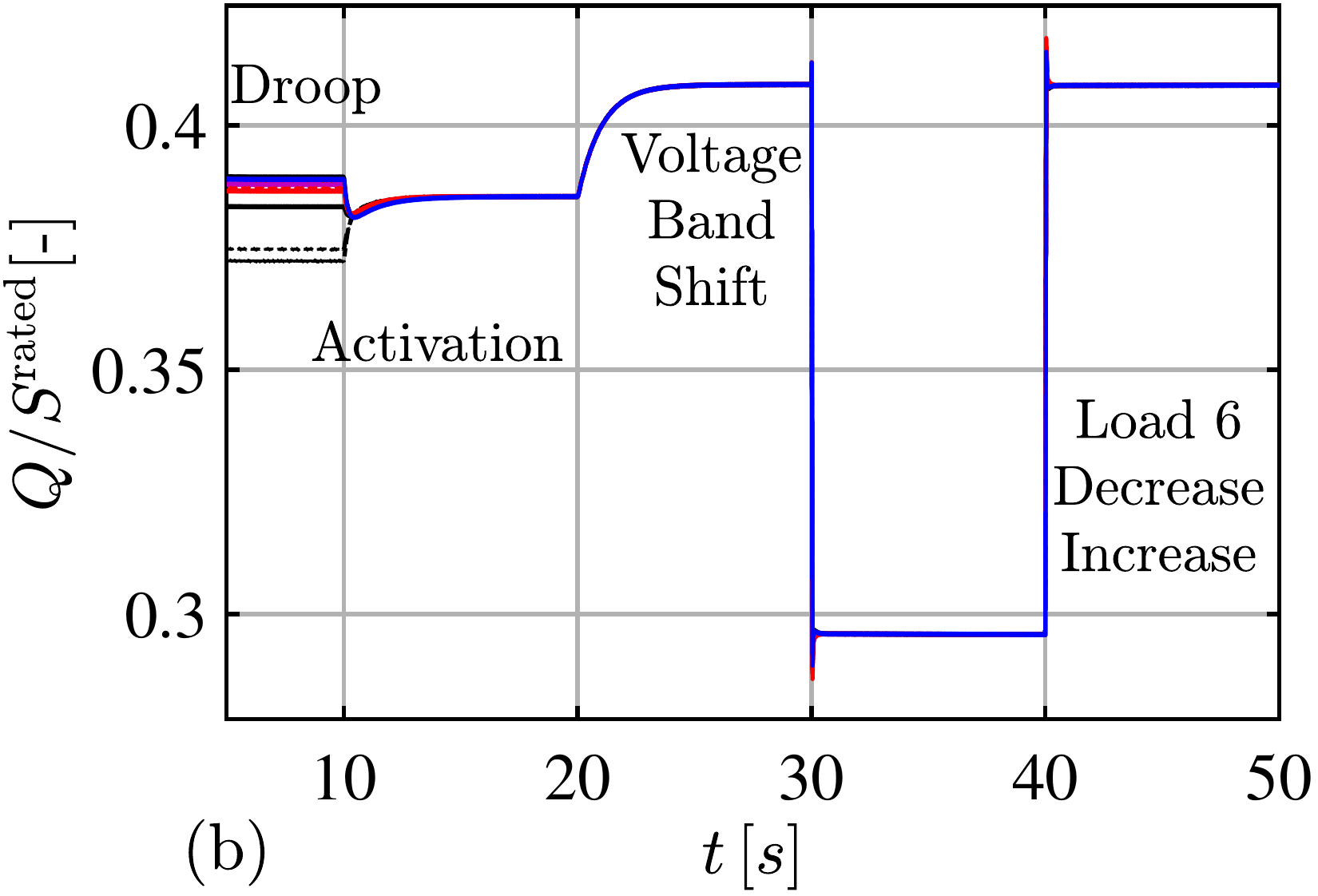}
\caption{Simulation results for Case Study 2; (a) voltages and (b) reactive power ratios $Q_i/S_i^{\rm rated}$.\label{Fig:CaseStudy2}\vspace{-.15in}}
\end{figure}

\subsection{Case Study 2: CIGRE Benchmark and Voltage Level Shift}
\label{Subsec:CaseStudy2}

We also applied our controller to a system based on the Subnetwork 1 of the European medium-voltage (20-kV,50-Hz) distribution network benchmark, provided by CIGRE Task Force C6.04.02\cite{CIGRE575}. Except for the following modifications, all the specifications of the test system are the same as the original network. Based on the Task Force recommendation, the simulated microgrid is an isolated 9-bus subnetwork of the CIGRE system composed of buses number 3 to 11. All the distributed generators at each bus are lumped into one single dispatchable IBR governed by the proposed controller. To meet the maximum load demand in the islanded microgrid the rated power of the IBRs are all increased by 50\%. The control gains are similar to the previous case study, but the voltage limits are $(0.98,1.02)$ [p.u.]. The simulation results are shown in Fig.~\ref{Fig:CaseStudy2}. At $t=10s$, the controller is activated and the voltages and reactive powers are controlled properly. At $t=20s$, we shift the voltage level by setting the new limits $(1.01,1.05)$ [p.u.]. At $t=30s$ and $t=40s$, we disconnect and connect back the residential loads at buses 6 and 8. The results highlight that under the proposed method, we can shift the voltage level in a controlled way while keeping reactive power sharing at different voltage levels. \vspace{-.1in}

\section{Conclusion}
\label{Sec:Conclusion}

Voltage regulation and reactive power sharing in power systems are two highly coupled control objectives. This coupling is because reactive power flow between two nodes depends more strongly on their voltage differences than the absolute values of the voltages. We proposed a nonlinear controller based on a hyperbolic tangent function and a distributed primal-dual optimizer. The controller provides the IBRs with acceptable reactive power sharing while keeping their voltages within some user-defined limits. We also found stability conditions for the system considering the voltage-angle couplings, under timescale separation between the voltage and optimizer dynamics. The numerical simulations, followed by a proposed parameter selection guideline, indicated a promising performance from the proposed method in controlling the voltage level of the network and achieving reactive power sharing among the IBRs.\vspace{-.1in}
\appendices

\section{Communication Network Model and Graph Theory}
\label{Appendix:Graph}
An inter-IBR data network can be modeled by an undirected graph where the IBRs and communication links are considered its nodes and edges, respectively. Let $\mathcal{G}=(\mathcal{N},\mathcal{E},\mathcal{A})$ be a graph with $\mathcal{N}=\{ 1,...,n\}$, $\mathcal{E}\subseteq\mathcal{N} \times \mathcal{N}$, and $\mathcal{A}=[a_{ij}]\in\mathbb{R}^{n\times n}$ being its node set, edge set, and adjacency matrix, respectively. If the nodes $i$ and $j$ directly exchange data, they are neighbors, meaning that $(i,j) \in \mathcal{E}$ and $(j,i)\in\mathcal{E}$, and $a_{ij}=a_{ji}=0$; otherwise,  $a_{ij}=a_{ji}=0$. Let $N_i=\{ j\mid (j,i)\in\mathcal{E}\}$ and $d_i={\sum}_{j\in N_i}a_{ij}$ be the neighbor set and in-degree associated with node $i$, respectively. Laplacian matrix of $\mathcal{G}$ is defined as $\mathcal{L}=\mathcal{D}-\mathcal{A}$, where $\mathcal{D}=\mathrm{diag}\{d_i\}$. A walk (or path) from node $i$ to node $j$ is an ordered sequence of nodes such that any pair of consecutive nodes in the sequence is an edge of the graph. A graph is connected if there exists a walk between any two nodes\cite{FB-LNS}.\vspace{-.1in}
\section{Components of the Reduced Dynamics in \eqref{eq:GroundedSystem}}
\label{Appendix:Components}
With $I_{r}=[0_{n-1}\, I_{n-1}]\in \mathbb{R}^{(n-1)\times n}$, one can obtain the components of the system \eqref{eq:GroundedSystem} as
\[
\begin{cases}
R_\theta = - I_r Tm S^{-1}  J_\theta^P T^{-1} I_r^\top,\qquad R_{\theta V} = -  I_r Tm S^{-1} J_V^P,\\
R_{v\theta}= [V_\star](\mathcal{K}-I_n)S^{-1} J_\theta^Q T^{-1}  I_r^\top,\\
R_{vV} = [V_\star](\mathcal{K}-I_n)S^{-1} J_V^Q,\qquad R_{v\zeta}= - [V_\star]\mathcal{K} \mathcal{L} T^{-1} I_r^\top,\\
R_{\zeta \theta}= I_r \tau_v^{-1}T\mathcal{L} \mathcal{K}S^{-1}   J_\theta^Q T^{-1} I_r^\top,\\
R_{\zeta V} = I_r \tau_v^{-1}T\mathcal{L} \mathcal{K}S^{-1} J_V^Q,\quad R_\zeta = -  I_r \tau_v^{-1}T\mathcal{L} \mathcal{K} \mathcal{L} T^{-1} I_r^\top,
\end{cases}
\]

\[
\begin{cases}
R_\theta^{\rm av} = -1_n^\top T^\top Tm S^{-1}  J_\theta^P T^{-1} I_r^\top,\\
R_{\theta V}^{\rm av} = -  1_n^\top T^\top Tm S^{-1} J_V^P,\\
d_\theta = I_r \omega_{\rm nom}T1_n  -  I_r Tm S^{-1} w_P,\\
d_v=\beta V_\star + [V_\star](\mathcal{K}-I_n)S^{-1} w_Q,\\
d_\zeta = I_r \tau_v^{-1}T\mathcal{L} \mathcal{K}S^{-1} w_Q, \\
d_\theta^{\rm av} = 1_n^\top T^\top \omega_{\rm nom}T1_n  -  1_n^\top T^\top Tm S^{-1} w_P.  
\end{cases}
\]


\bibliographystyle{IEEEtran.bst}
\bibliography{IEEEabrv,References}

\end{document}